\documentclass[preprint]{elsarticle}

\begin{filecontents}{supplement.aux}
\relax 
\providecommand\hyper@newdestlabel[2]{}
\providecommand\HyperFirstAtBeginDocument{\AtBeginDocument}
\HyperFirstAtBeginDocument{\ifx\hyper@anchor\@undefined
\global\let\oldnewlabel\newlabel
\gdef\newlabel#1#2{\newlabelxx{#1}#2}
\gdef\newlabelxx#1#2#3#4#5#6{\oldnewlabel{#1}{{#2}{#3}}}
\AtEndDocument{\ifx\hyper@anchor\@undefined
\let\newlabel\oldnewlabel
\fi}
\fi}
\global\let\hyper@last\relax 
\gdef\HyperFirstAtBeginDocument#1{#1}
\providecommand\HyField@AuxAddToFields[1]{}
\providecommand\HyField@AuxAddToCoFields[2]{}
\providecommand\@newglossary[4]{}
\@newglossary{main}{glg}{gls}{glo}
\@LN{0}{0}
\@LN{1}{0}
\@LN{2}{0}
\@LN{3}{0}
\@LN{4}{0}
\@LN{5}{0}
\@LN{6}{0}
\@LN{7}{0}
\@writefile{toc}{\contentsline {section}{\numberline {A}Parameters of the Projected Quasi-Newton Algorithms}{1}{appendix.A}\protected@file@percent }
\@LN{8}{0}
\@LN{9}{0}
\@LN{10}{0}
\@LN{11}{0}
\@LN{12}{0}
\@LN{13}{0}
\@LN{14}{0}
\@LN{15}{0}
\@LN{16}{0}
\@LN{17}{0}
\@LN{18}{0}
\@LN{19}{0}
\@LN{20}{0}
\@LN{21}{0}
\@LN{22}{0}
\@LN{23}{0}
\@writefile{toc}{\contentsline {section}{\numberline {B}Proof of Lemma \ref {ext:lem:convPostH}}{1}{appendix.B}\protected@file@percent }
\newlabel{proof:convPostH}{{B}{1}{Proof of Lemma \ref {ext:lem:convPostH}}{appendix.B}{}}
\newlabel{proof:convPostH@cref}{{[appendix][2][2147483647]B}{[1][1][]1}}
\@LN{24}{0}
\@LN{25}{0}
\newlabel{eq:convLikelihood}{{S-B.1}{1}{Proof of Lemma \ref {ext:lem:convPostH}}{equation.B.1}{}}
\newlabel{eq:convLikelihood@cref}{{[equation][1][2147483647]S-B.1}{[1][1][]1}}
\@LN{26}{0}
\newlabel{eq:kl_def}{{S-B.2}{1}{Proof of Lemma \ref {ext:lem:convPostH}}{equation.B.2}{}}
\newlabel{eq:kl_def@cref}{{[equation][2][2147483647]S-B.2}{[1][1][]1}}
\citation{kleijn2012bernstein}
\citation{kleijn2012bernstein}
\@LN{27}{1}
\@LN{28}{1}
\@LN{29}{1}
\@LN{30}{1}
\newlabel{eq:kl_cases}{{S-B.3}{2}{Proof of Lemma \ref {ext:lem:convPostH}}{equation.B.3}{}}
\newlabel{eq:kl_cases@cref}{{[equation][3][2147483647]S-B.3}{[1][2][]2}}
\@LN{31}{1}
\@LN{32}{1}
\@LN{33}{1}
\@LN{34}{1}
\newlabel{eq:px_convergence}{{S-B.4}{2}{Proof of Lemma \ref {ext:lem:convPostH}}{equation.B.4}{}}
\newlabel{eq:px_convergence@cref}{{[equation][4][2147483647]S-B.4}{[1][2][]2}}
\@LN{35}{1}
\@LN{36}{1}
\newlabel{eq:posterior}{{S-B.5}{2}{Proof of Lemma \ref {ext:lem:convPostH}}{equation.B.5}{}}
\newlabel{eq:posterior@cref}{{[equation][5][2147483647]S-B.5}{[1][2][]2}}
\@LN{37}{1}
\@LN{38}{1}
\@LN{39}{1}
\@LN{40}{1}
\@LN{41}{1}
\@LN{42}{1}
\@LN{43}{1}
\@LN{44}{1}
\@LN{45}{1}
\citation{kleijn2012bernstein}
\citation{everitt2010cambridge}
\@LN{46}{2}
\@writefile{toc}{\contentsline {section}{\numberline {C}Proof of Lemma \ref {ext:lem:convPostVar}}{3}{appendix.C}\protected@file@percent }
\newlabel{proof:convPostVar}{{C}{3}{Proof of Lemma \ref {ext:lem:convPostVar}}{appendix.C}{}}
\newlabel{proof:convPostVar@cref}{{[appendix][3][2147483647]C}{[1][2][]3}}
\@LN{47}{2}
\@LN{48}{2}
\@LN{49}{2}
\@LN{50}{2}
\@LN{51}{2}
\@LN{52}{2}
\@LN{53}{2}
\@LN{54}{2}
\newlabel{eq:minSecDerKL}{{S-C.7}{3}{Proof of Lemma \ref {ext:lem:convPostVar}}{equation.C.7}{}}
\newlabel{eq:minSecDerKL@cref}{{[equation][7][2147483647]S-C.7}{[1][3][]3}}
\@LN{55}{2}
\@LN{56}{2}
\@LN{57}{2}
\@LN{58}{2}
\@LN{59}{2}
\@LN{60}{2}
\@LN{61}{2}
\@LN{62}{2}
\@writefile{toc}{\contentsline {section}{\numberline {D}Proof of Lemma \ref {ext:lem:finiteEx}}{3}{appendix.D}\protected@file@percent }
\newlabel{proof:finiteEx}{{D}{3}{Proof of Lemma \ref {ext:lem:finiteEx}}{appendix.D}{}}
\newlabel{proof:finiteEx@cref}{{[appendix][4][2147483647]D}{[1][3][]3}}
\@LN{63}{2}
\@LN{64}{2}
\newlabel{eq:expValue}{{S-D.8}{3}{Proof of Lemma \ref {ext:lem:finiteEx}}{equation.D.8}{}}
\newlabel{eq:expValue@cref}{{[equation][8][2147483647]S-D.8}{[1][3][]3}}
\@LN{65}{2}
\@LN{66}{2}
\newlabel{eq:var_exp}{{S-D.9}{3}{Proof of Lemma \ref {ext:lem:finiteEx}}{equation.D.9}{}}
\newlabel{eq:var_exp@cref}{{[equation][9][2147483647]S-D.9}{[1][3][]3}}
\@LN{67}{2}
\newlabel{eq:var_expr_full}{{S-D.10}{3}{Proof of Lemma \ref {ext:lem:finiteEx}}{equation.D.10}{}}
\newlabel{eq:var_expr_full@cref}{{[equation][10][2147483647]S-D.10}{[1][3][]3}}
\@LN{68}{2}
\citation{reinhard2019bayesian}
\citation{reinhard2021multiple}
\citation{reinhard2021multiple}
\@LN{69}{3}
\@LN{70}{3}
\@LN{71}{3}
\@LN{72}{3}
\@LN{73}{3}
\@LN{74}{3}
\@LN{75}{3}
\@writefile{toc}{\contentsline {section}{\numberline {E}Proof of Theorem \ref {ext:theo:derivatives}}{4}{appendix.E}\protected@file@percent }
\newlabel{app:proofDerivatives}{{E}{4}{Proof of Theorem \ref {ext:theo:derivatives}}{appendix.E}{}}
\newlabel{app:proofDerivatives@cref}{{[appendix][5][2147483647]E}{[1][4][]4}}
\@LN{76}{3}
\@LN{77}{3}
\@LN{78}{3}
\newlabel{eq:defCostFct}{{S-E.11}{4}{Proof of Theorem \ref {ext:theo:derivatives}}{equation.E.11}{}}
\newlabel{eq:defCostFct@cref}{{[equation][11][2147483647]S-E.11}{[1][4][]4}}
\@LN{79}{3}
\@LN{80}{3}
\@LN{81}{3}
\newlabel{eq:optStopConv}{{S-E.12}{4}{Proof of Theorem \ref {ext:theo:derivatives}}{equation.E.12}{}}
\newlabel{eq:optStopConv@cref}{{[equation][12][2147483647]S-E.12}{[1][4][]4}}
\@LN{82}{3}
\@LN{83}{3}
\@LN{84}{3}
\newlabel{eq:delta_Vn}{{S-E.13}{4}{Proof of Theorem \ref {ext:theo:derivatives}}{equation.E.13}{}}
\newlabel{eq:delta_Vn@cref}{{[equation][13][2147483647]S-E.13}{[1][4][]4}}
\@LN{85}{3}
\@LN{86}{3}
\@LN{87}{3}
\@LN{88}{3}
\@LN{89}{3}
\@LN{90}{3}
\@LN{91}{3}
\@LN{92}{3}
\@LN{93}{3}
\citation{reinhard2021multiple}
\@LN{94}{4}
\@LN{95}{4}
\@LN{96}{4}
\@LN{97}{4}
\@LN{98}{4}
\@LN{99}{4}
\@LN{100}{4}
\@LN{101}{4}
\@LN{102}{4}
\@LN{103}{4}
\@writefile{toc}{\contentsline {section}{\numberline {F}Proof of Theorem \ref {ext:theo:asymptotic_duality}}{5}{appendix.F}\protected@file@percent }
\newlabel{proof:asymptotic_duality}{{F}{5}{Proof of Theorem \ref {ext:theo:asymptotic_duality}}{appendix.F}{}}
\newlabel{proof:asymptotic_duality@cref}{{[appendix][6][2147483647]F}{[1][5][]5}}
\@LN{104}{4}
\@LN{105}{4}
\newlabel{eq:grad}{{S-F.14}{5}{Proof of Theorem \ref {ext:theo:asymptotic_duality}}{equation.F.14}{}}
\newlabel{eq:grad@cref}{{[equation][14][2147483647]S-F.14}{[1][5][]5}}
\@LN{106}{4}
\@LN{107}{4}
\@LN{108}{4}
\@LN{109}{4}
\@LN{110}{4}
\@LN{111}{4}
\@LN{112}{5}
\@writefile{toc}{\contentsline {section}{\numberline {G}Proof of Assumptions \ref {ext:ass:fishInf} to \ref {ext:ass:klTaylor} for the Shift-in-Mean Scenario}{6}{appendix.G}\protected@file@percent }
\newlabel{app:sim_assumptions}{{G}{6}{Proof of Assumptions \ref {ext:ass:fishInf} to \ref {ext:ass:klTaylor} for the Shift-in-Mean Scenario}{appendix.G}{}}
\newlabel{app:sim_assumptions@cref}{{[appendix][7][2147483647]G}{[1][5][]6}}
\@LN{113}{5}
\@LN{114}{5}
\@LN{115}{5}
\@LN{116}{5}
\@LN{117}{5}
\@LN{118}{5}
\@LN{119}{5}
\@LN{120}{5}
\@LN{121}{5}
\@LN{122}{5}
\@LN{123}{5}
\@LN{124}{5}
\@writefile{toc}{\contentsline {section}{\numberline {H}Proof of Assumptions \ref {ext:ass:fishInf} to \ref {ext:ass:klTaylor} for the QAM Scenario}{6}{appendix.H}\protected@file@percent }
\newlabel{app:QAM_assumptions}{{H}{6}{Proof of Assumptions \ref {ext:ass:fishInf} to \ref {ext:ass:klTaylor} for the QAM Scenario}{appendix.H}{}}
\newlabel{app:QAM_assumptions@cref}{{[appendix][8][2147483647]H}{[1][6][]6}}
\@LN{125}{5}
\@LN{126}{5}
\@LN{127}{5}
\@LN{128}{5}
\@LN{129}{5}
\@LN{130}{5}
\@LN{131}{5}
\@LN{132}{5}
\@LN{133}{5}
\@LN{134}{5}
\@LN{135}{5}
\bibstyle{IEEEtran}
\bibdata{./IEEEabrv,./mrabbrev,./references}
\bibcite{kleijn2012bernstein}{1}
\@LN{136}{6}
\@LN{137}{6}
\@LN{138}{6}
\@LN{139}{6}
\@LN{140}{6}
\@LN{141}{6}
\@LN{142}{6}
\@LN{143}{6}
\@LN{144}{6}
\@writefile{toc}{\contentsline {section}{\numberline {I}Proof of Assumptions \ref {ext:ass:fishInf} to \ref {ext:ass:klTaylor} for Estimating a Random Parameter in Unknown Noise}{7}{appendix.I}\protected@file@percent }
\newlabel{app:diffNoise_assumptions}{{I}{7}{Proof of Assumptions \ref {ext:ass:fishInf} to \ref {ext:ass:klTaylor} for Estimating a Random Parameter in Unknown Noise}{appendix.I}{}}
\newlabel{app:diffNoise_assumptions@cref}{{[appendix][9][2147483647]I}{[1][7][]7}}
\@LN{145}{6}
\@LN{146}{6}
\@LN{147}{6}
\@LN{148}{6}
\@LN{149}{6}
\@LN{150}{6}
\@LN{151}{6}
\@LN{152}{6}
\@LN{153}{6}
\@LN{154}{6}
\@LN{155}{6}
\@LN{156}{6}
\@LN{157}{6}
\@LN{158}{6}
\bibcite{everitt2010cambridge}{2}
\bibcite{reinhard2019bayesian}{3}
\bibcite{reinhard2021multiple}{4}
\@LN{159}{7}
\@LN{160}{7}
\@LN{161}{7}
\@LN{162}{7}
\@LN{163}{7}
\@LN{164}{7}
\@LN{165}{7}
\@LN{166}{7}
\@LN{167}{7}
\gdef \@abspage@last{8}
\end{filecontents}

\usepackage{amsmath, amssymb, amsthm}
\usepackage{mathtools}

\usepackage{enumitem}

\usepackage{graphicx}

\usepackage{xcolor}

\usepackage{glossaries}

\usepackage{makecell}

\usepackage{bm}
\usepackage{algorithmic}
\usepackage{algorithm}
\usepackage[]{subfig}
\usepackage{booktabs}

\usepackage{xr}
\externaldocument[ext:]{supplement}

\usepackage{hyperref}
\usepackage[capitalise]{cleveref}

\externaldocument{supplemental}

\graphicspath{{./img/}{./plots/}}

 \DeclareMathOperator{\E}{\mathsf{E}}
\DeclareMathOperator{\Var}{\mathsf{Var}}
\DeclareMathOperator*{\argmax}{\mathrm{arg\,max}}
\DeclareMathOperator*{\argmin}{\mathrm{arg\,min}}
\DeclareMathOperator{\Tr}{\mathrm{Tr}}

\newcommand{\given}{\,|\,}

\newcommand{\Bgiven}{\,\Big|\,}
\newcommand{\bbgiven}{\,\bigg|\,}

\newcommand{\Hyp}{\mathrm{H}}
\newcommand{\param}{\boldsymbol\theta}
\newcommand{\paramRV}{\boldsymbol\Theta}
\newcommand{\paramScalar}{\theta}
\newcommand{\paramRVScalar}{\Theta}

\newcommand{\paramRVElement}{\Theta}
\newcommand{\seqDataRV}{(\mathbf{X}_n)_{n>0}}
\newcommand{\obsSingle}{\mathbf{x}}
\newcommand{\RVsingle}{\mathbf{X}}
\newcommand{\obsSeq}[1][n]{\mathbf{x}_{1:#1}}

\newcommand{\obsSingleScalar}{x}
\newcommand{\RVsingleScalar}{X}

\newcommand{\natNumbers}{\mathbb{N}}
\newcommand{\natNumbersInclZero}{\natNumbers_{0}}
\newcommand{\reals}{\mathbb{R}}

\newcommand{\Lapl}{\mathrm{Lap}}

\newcommand{\policy}{\pi}
\newcommand{\policySet}{\Pi}
\newcommand{\decR}[1][n]{\delta_{#1}}
\newcommand{\stopR}[1][n]{\Psi_{#1}}

\newcommand{\est}[2][n]{\hat{\boldsymbol\theta}_{#2,#1}}
\newcommand{\estOpt}[2][n]{\hat{\boldsymbol\theta}^\star_{#2,#1}}
\newcommand{\errorCovMatrix}[1][m]{\boldsymbol \Sigma_{#1}}

\newcommand{\detErr}[1][m]{\alpha^{#1}}
\newcommand{\estErr}[1][m]{\beta^{#1}}
\newcommand{\detConstr}{\bar{\alpha}}
\newcommand{\estConstr}{\bar{\beta}}
\newcommand{\detCost}[1][i]{\lambda_{#1}}
\newcommand{\estCost}[1][i]{\mu_{#1}}
\newcommand{\detCostOpt}[1][i]{\lambda^\star_{#1}}
\newcommand{\estCostOpt}[1][i]{\mu^\star_{#1}}

\newcommand{\trueParam}{\param^\star_{\trueHypIdx}}

\newcommand{\trueHyp}{\Hyp_{\trueHypIdx}}
\newcommand{\trueHypIdx}{m^\star}

\newcommand{\iid}{\overset{\,\mathrm{iid}}{\sim}\,}
\newcommand{\dInt}{\mathrm{d}}
\newcommand{\KL}[2]{\mathrm{D}_{\mathrm{KL}}\!\left(\,#1\,\|\,#2\,\right)}

\newcommand{\fishInfMtx}[2][\trueParam]{\bm{\mathcal{I}}_{#2}(#1)}
\newcommand{\fishInfMtxInv}[2][\trueParam]{\bm{\mathcal{I}}^{-1}_{#2}(#1)}

\newcommand{\optRL}{\tau^\star}
\newcommand{\aoRL}{\tau^\circ}
\newcommand{\stopRopt}[1][n]{\Psi_{#1}^\star}
\newcommand{\stopRao}[1][n]{\Psi_{#1}^\circ}

\newcommand{\indd}[1]{\mathbf{1}_{#1}} \newcommand{\ind}[1]{\indd{\{#1\}} }

\newcommand{\Gam}{\mathrm{Gam}}
\newcommand{\invGam}{\mathrm{IGam}}
\newcommand{\unif}{\mathcal{U}}
\newcommand{\norm}[2]{\ensuremath{\mathcal{N}\left(#1,#2\right)}}
\newcommand{\complexGauss}{\mathcal{CN}}

\newcommand{\gDiv}{\bar{g}}
\newcommand{\DDiv}[1][m]{\bar{D}_{#1,n}}
\newcommand{\detCostDiv}[1][i]{\bar{\lambda}_{#1}}
\newcommand{\estCostDiv}[1][i]{\bar{\mu}_{#1}}
\newcommand{\divCost}{\max\{\detCost[1],\ldots,\detCost[M],\estCost[1],\ldots,\estCost[M]\}}

\newcommand{\gradEstMC}{\hat\nabla_{\estCost[]}}
\newcommand{\gradDetMC}{\hat\nabla_{\detCost[]}}

\newcommand{\detErrMC}[1][m]{\hat\alpha^{#1}}
\newcommand{\estErrMC}[1][m]{\hat\beta^{#1}}
\newcommand{\smallNonNegNumber}{\varepsilon}

\newcommand{\invHess}{\mathbf{H}}

\newcommand{\detTol}[1][m]{\varepsilon_{\detErr[]}^{#1}}
\newcommand{\estTol}[1][m]{\varepsilon_{\estErr[]}^{#1}}

\newcommand{\llr}{\eta}

\newcommand{\QAMsymbol}[1][m]{A_{#1}}
\newcommand{\noiseSingle}{v}

\newcommand{\noisePower}{\sigma^2}

\newcommand{\etal}{\emph{et al.}{} }

\newlength{\imgWidthSingle}
\setlength{\imgWidthSingle}{227pt}

\newlength{\imgWidthDouble}
\setlength{\imgWidthDouble}{516pt}

\newtheorem{definition}{Definition}
\newtheorem{theorem}{Theorem}
\newtheorem{lemma}{Lemma}

\newacronym{apo}{APO}{asymptotically pointwise optimal}
\newacronym{ao}{AO}{asymptotically optimal}
\newacronym{awgn}{AWGN}{additive white Gaussian noise}
\newacronym{iid}{iid}{independent and identically distributed}
\newacronym{kl}{KL divergence}{Kullback-Leibler divergence}
\newacronym{MAE}{MAE}{mean absolute error}
\newacronym{mmse}{MMSE}{minimum mean-squared error}
\newacronym{MSE}{MSE}{mean-squared error}
\newacronym{msprt}{MSPRT}{Matrix Sequential Probability Ratio Test}
\newacronym{pdf}{pdf}{probability density function}
\newacronym{sprt}{SPRT}{sequential probability ratio test}
\newacronym{qam}{QAM}{quadrature amplitude modulation}
\newacronym{bfgs}{BFGS}{Broyden–Fletcher–Goldfarb–Shanno}
\newacronym{fim}{FIM}{Fisher's information matrix}
 \makeglossaries

\crefname{assumption}{Assumption}{Assumptions}
\newlist{assenum}{enumerate}{1} \setlist[assenum]{label={\textbf{A\arabic*}}}
\crefalias{assenumi}{assumption}

\crefname{equation}{}{}

\begin{document}
\title{Asymptotically Optimal Procedures for Sequential Joint Detection and Estimation}
\renewcommand{\floatpagefraction}{.8}

\newcommand{\new}[1]{\textcolor{blue}{#1}}
\newcommand{\IEEEQED}{\mbox{\rule[0pt]{1.3ex}{1.3ex}}}
\maketitle

\begin{frontmatter}

\author[1]{Dominik Reinhard\corref{cor1}}
\ead{reinhard@spg.tu-darmstadt.de}

\author[2]{Michael Fau\ss{}
\fnref{fn1}'}
\ead{mfauss@princeton.edu}

\author[1]{Abdelhak M. Zoubir}
\ead{zoubir@spg.tu-darmstadt.de}

\cortext[cor1]{Corresponding author}

\fntext[fn1]{The work of Michael Fau\ss{} was supported by the German Research Foundation (DFG) under grant number 424522268.}

\affiliation[1]{organization={Technische Universität Darmstadt},
postcode={64283},
city={Darmstadt},
country={Germany}}

\affiliation[2]{organization={Department of Electrical Engineering, Princeton University},
postcode={NJ 08544},
city={Princeton},
country={USA}}

\begin{abstract}
 We investigate the problem of jointly testing multiple hypotheses and estimating a random parameter of the underlying distribution in a sequential setup.
 The aim is to jointly infer the true hypothesis and the true parameter while using on average as few samples as possible and keeping the detection and estimation errors below predefined levels.
 Based on mild assumptions on the underlying model, we propose an asymptotically optimal procedure, i.e., a procedure that becomes optimal when the tolerated detection and estimation error levels tend to zero.
 The implementation of the resulting asymptotically optimal stopping rule is computationally cheap and, hence, applicable for high-dimensional data.
 We further propose a projected quasi-Newton method to optimally choose the coefficients that parameterize the instantaneous cost function such that the constraints are fulfilled with equality.
 The proposed theory is validated by numerical examples.
\end{abstract}

\begin{keyword}

Joint detection and estimation, sequential analysis, asymptotically optimal
\end{keyword}
\end{frontmatter}

\glsresetall
\section{Introduction}
There exist various scenarios in which hypothesis testing and parameter estimation occur in a coupled way and the outcome of both is important.
That is, one would like to decide among several hypotheses and, depending on the decision, estimate some parameter of the model under the selected hypothesis.
In radar, for example, one would like to detect the presence of a target and, if a target is present, to estimate, e.g., its position or velocity \cite{tajer2010Optimal, chen2018impact}.
To perform dynamic spectrum access in cognitive radio, the secondary user has to detect the primary user and to estimate the possible interference\cite{Yilmaz2014Sequential}.
Detecting the presence of a signal and estimating the channel is of interest in many communication scenarios\cite{jan2018iterative}.
There exist many more applications in which such scenarios arise as, for example, biomedical engineering\cite{chaari2013fast,Makni2008}, speech processing\cite{Momeni2015Joint}, visual inference\cite{Vo2010Joint} or changepoint detection\cite{boutoille2010hybrid}.

The problem of treating hypothesis testing and parameter estimation jointly dates back to the late 1960s.
Middleton and Espositio used a Bayesian framework to find a jointly optimal solution \cite{Middleton1968Simultaneous}.
That framework was later extended to multiple hypotheses \cite{fredriksen1972simultaneous}.
After a period of declining interest, this problem has regained more attention since 2000\cite{tajer2010Optimal,Yilmaz2014Sequential,jan2018iterative,chaari2013fast,Makni2008,Momeni2015Joint,Vo2010Joint,boutoille2010hybrid}.

In many applications, one would like to perform inference by using as few samples as possible and maintaining its quality at the same time.
These requirements can, for example, be caused by constraints on the latency or the power consumption of a system.
In the late 1940s, Abraham Wald introduced the field of sequential analysis with the invention of the \gls{sprt}\cite{wald1947sequential}.
In sequential inference, data is collected until one is confident enough about the phenomenon of interest.
One advantage of sequential methods is that they use on average signifacantly fewer samples compared to their counterparts with a fixed sample size while having the same inference quality.
An overview of sequential detection and sequential estimation methods is, for example, given in \cite{tartakovsky2014sequential} and \cite{ghosh2011sequential}, respectively.

Treating the problem of joint detection and estimation using a fixed number of samples allows to trade-off detection and estimation accuracy.
Contrary to this, the additional degree of freedom, i.e., the number of used samples, in a sequential setup allows to control the detection and estimation errors individually.
This is important because detection and estimation often have contradicting requirements on the data.
Hence, using a two-step procedure, e.g., a sequential detector followed by an optimal estimator, would allow to control the detection errors, but most probably results in very high estimation errors, which is not desired.
This is illustrated by an example.
As mentioned before, detecting the presence of a target and estimating its position is a typical problem in radar.
In this application, the detection performance as well as the estimation accuracy are important.
If the radar system observes a peak with a very high amplitude that points into the direction of the target, its presence can be detected reliably, but its position cannot be estimated accurately if the peak is too broad.
Conversely, if the system observes a very narrow peak of a small amplitude, the position of the target can be estimated accurately, but, on the other hand, its presence cannot be detected reliably.
Therefore, the procedure has to gather more samples until it observes a very narrow peak of a high amplitude, i.e., until the target can be detected reliably and its position can be estimated accurately.
Note that this is an oversimplification to highlight the underlying fundamental \emph{joint} detection and estimation problem.
\subsection{State-of-the-Art}
There is only little prior work on joint detection and parameter estimation in a sequential framework.
Although there exist sequential hypothesis tests that include an estimation step, like the generalized \gls{sprt} or the adaptive \gls{sprt}, estimation only occurs as an intermediate step and the primary interest lies in the outcome of the test.
More details on the generalized \gls{sprt} and the adaptive \gls{sprt} can be found in \cite[Section 5.4]{tartakovsky2014sequential}.

Simultaneous detection and estimation in a sequential setting was first mentioned in \cite{birdsall1973sufficient}.
However, that work considers only sequential updating of a sufficient statistic for the problem of joint detection and estimation rather than presenting a solution of the joint inference problem.

Sequential joint detection and state estimation of Markov signals was investigated in \cite{grossi2008sequential,buzzi2006joint}.
However, although estimating the true state is of primary interest, these works do not consider the uncertainty about the true state in the number of used samples.
A sequential procedure that overcomes this problem was proposed for a related problem in \cite{grossi2009sequential}.

Initially, Y\i lmaz \etal investigated the problem of optimal sequential joint detection and estimation \cite{yilmaz2015sequential}.
In that work, a binary hypothesis test was first performed and, in case a decision was made in favor of the alternative, a random parameter is to be estimated.
The method in \cite{yilmaz2015sequential} was extended to multiple hypotheses \cite{Yilmaz2016Sequential} and applied to joint spectrum sensing and channel estimation \cite{Yilmaz2014Sequential}.
In the works by Y\i lmaz \emph{et al.}, a weighted sum of detection and estimation errors is used as a cost function for the joint inference problem.
This cost function is kept below a predefined level while minimizing the number of used samples for \emph{every} set of observations.

In \cite{reinhard2019bayesian}, we proposed a novel framework for optimal sequential procedures for sequential joint detection and estimation.
In that framework, the aim is to minimize the \emph{expected} number of samples while keeping the detection and estimation errors below predefined levels.
That framework was later extended to multiple hypotheses \cite{reinhard2020multiple,reinhard2021multiple} and applied to joint signal detection and signal-to-noise ratio estimation \cite{reinhard2019JointSNR} as well as to joint symbol decoding and noise power estimation\cite{reinhard2020multiple,reinhard2021multiple}.
Sequential joint detection and estimation in distributed sensor networks was investigated in \cite{reinhard2020Distributed,reinhard2020DistributedNonGaussian}.

Although the procedures in \cite{reinhard2019bayesian, reinhard2021multiple} are strictly optimal, i.e., there is no other procedure that uses less samples on average and fulfills the constraints, their design can be computationally costly as its implementation requires a discretization of the state space.
To overcome this, we presented an asymptotically pointwise optimal procedure \cite{reinhard2021apo}, i.e., a procedure that becomes optimal when the tolerated detection and estimation error levels become sufficiently small.
However, the work in \cite{reinhard2021apo} relies on strict assumptions. Namely, the hypotheses only differ in the prior of the random parameter and the data conditioned on the random parameter is \gls{iid} with a distribution from the exponential family.

\subsection{Contributions and Outline}
In this work, we propose an \gls{ao} procedure for the problem of sequential joint detection and estimation under mild assumptions on the underlying model.
To implement the optimal procedures \cite{reinhard2019bayesian, reinhard2021multiple}, one has to find a sufficient statistic that represents the data (see \cite[Assumption A3]{reinhard2021multiple}), discretize its state space and numerically solve a set of recursively defined Bellman equations after restricting the maximum number of samples.
This has the following drawbacks: first, it is not possible to find a sufficient statistic for every problem (see \cref{sec:num_res_unknown_noise}). Second, the state space of the sufficient statistic has to be of a low dimension such that it can be discretized accurately.
Next, the optimal procedure has to be truncated, which can be challenging as the number of required samples has to be known prior to designing the procedure.
Finally, solving the recursively defined Bellman equations is computationally demanding.

The proposed AO procedure overcomes these challenges as it does not require a restriction on the maximum number of samples and it is easy to implement because it does not require solving the Bellman equations, despite achieving similar performance to the strictly optimal procedure \cite{reinhard2019bayesian, reinhard2021multiple}.
Moreover, it can deal with multiple parameters under each hypothesis, whereas the optimal procedure is restricted to single scalar parameters under each hypothesis.

The contributions of this manuscript are as follows.
First, we present an asymptotically optimal stopping rule for sequential joint detection and estimation that depends on a set of parameters.
Next, we show that the gradients of the \gls{ao} cost function are asymptotically proportional to the corresponding error metric, i.e., they become proportional as the nominal detection and estimation error levels tend to zero.
Based on this connection, it is then shown that the search for the optimal coefficients, i.e., the coefficients for which the procedure hits the nominal error levels, can be formulated as a convex optimization problem.
Finally, a projected quasi-Newton method to solve this problem is proposed.

This work is structured as follows.
In \cref{sec:problemFormulation}, the notation and the underlying assumptions are presented.
Moreover, some fundamentals of sequential joint detection and estimation are revisited, the performance measures are introduced and a detailed problem formulation is given.
For the sake of completeness and to introduce the relevant quantities, the reduction of the original problem to an optimal stopping problem is sketched in \cref{sec:reductionOptimalStopping}.
In \cref{sec:ao}, the proposed \gls{ao} policy is introduced and rigorous proofs of optimality are given.
The optimal choice of the coefficients parameterizing the \gls{ao} policy along a projected quasi-Newton algorithm is presented in \cref{sec:optCostCoeff}.
To validate the proposed theory, numerical examples are given in \cref{sec:numResults}.
Conclusions are drawn in \cref{sec:conclusions}.
 
\section{Problem Formulation}\label{sec:problemFormulation}
Let $\mathcal{X} = \seqDataRV$ be a sequence of random variables that is generated under one out of $M$ different hypotheses $\Hyp_m$, $m\in\{1,\ldots,M\}$.
Under each hypothesis, the distribution of the data depends on a real random parameter $\paramRV_m\in\reals^{K_m}$, $K_m\geq 1$, with known distribution.
The realization of $\paramRV_m$ is denoted by $\param_m$.
However, there may exist scenarios, in which the likelihood under the different hypotheses do not share the same parameter.
To cover these scenarios and have a precise notation, we decided to add the index $m$ to the parameter.
The occurrence of the hypotheses is also random with known prior probabilities $p(\Hyp_m)$, $m\in\{1,\ldots,M\}$.
The sequence of random variables is assumed to be conditionally \gls{iid}, i.e., the random variables $\mathcal{X}\given\Hyp_m,\param_m$ are \gls{iid}.
Hence, one can express the $M$ different hypotheses as
\begin{equation}\label{eq:hyps}
    \Hyp_m: \; \RVsingle_n\given\param_m \iid p(\obsSingle\given\Hyp_m,\param_m)\,,\; \paramRV_m\sim p(\param_m\given\Hyp_m)\,,
\end{equation}
where $m\in\{1,\ldots,M\}$.
Generally speaking, the true hypothesis may affect both, the distribution of the data given the parameter as well as the prior distribution on the parameters.
Since the hypotheses are composite ones, we are, besides deciding in favor of the true hypothesis, also interested in inferring the parameter that generates the sequence $\mathcal{X}$.
Using an optimal procedure for the hypothesis test and an optimal estimator would not necessarily result in an overall optimal performance\cite{moustakides2012joint}.
Therefore, the problem of detection and estimation has to be considered \emph{jointly}.
Moreover, the sequence $\mathcal{X}$ is observed sample by sample and one stops as soon as one is confident enough about the true hypothesis and the true parameter.
Hence, the problem is one of \emph{sequential joint detection and estimation}.

Before a more technical problem formulation is provided, the notation and assumptions are summarized and some fundamentals on sequential joint detection and estimation are introduced.
\subsection{Notations and Assumptions}\label{sec:assumptions}
Unless otherwise stated, random variables and their realizations are denoted by upper-case and lower-case symbols, respectively.
To keep the notation simple, the integration domain is not indicated explicitly for integrals that are taken over the entire domain, e.g.,
$\E[\RVsingle] = \int_{-\infty}^\infty \obsSingle p(\obsSingle)\dInt\obsSingle \equiv \int \obsSingle p(\obsSingle)\dInt\obsSingle\,$.
Moreover, the dependency of functions, such as estimators, on the observations is dropped for the sake of a compact notation.
The indicator function of event $\mathcal{A}$ is denoted by $\indd{\mathcal{A}}(x)$, i.e., $\indd{\mathcal{A}}(x) = 1$ iff $x\in\mathcal{A}$.
To end up with a compact notation, the argument of the indicator function is dropped, i.e., $\E[\indd{\mathcal{A}}(\RVsingle)]$ is replaced by $\E[\indd{\mathcal{A}}]$.
The trace of a matrix is $\Tr(\cdot)$ and $\obsSeq=\obsSingle_1,\ldots,\obsSingle_n$.
The \gls{kl}, \gls{fim} under $\Hyp_m$ and its inverse are denoted by $\KL{p(\obsSingle)}{q(\obsSingle)}$, $\fishInfMtx[\param_m]{m}$ and $\fishInfMtxInv[\param_m]{m}$, respectively.
Given a random sequence $a_n$ and a deterministic variable $a$, almost sure convergence $a_n\xrightarrow{a.s.} a$ is defined as $P(\lim_{n\rightarrow\infty} a_n=a)=1$. The limit $n\rightarrow\infty$ is dropped for the sake of a compact notation.
For the remaining convergence statements, the limits are stated explicitly.

 The following assumptions are made throughout the paper:
\begin{assenum}
 \item The true parameter and the true hypothesis stay constant during the observation period.
\item \label{ass:finiteSecondOrder} The variances of $\paramRV_m=[\paramRVElement^1_m,\ldots,\paramRVElement^{K_m}_m]$ are positive and finite, i.e.,
        \begin{align*}
         0 < \Var[\paramRVElement^k_m\given\Hyp_m] < \infty\,, \quad & \forall m\in\{1,\ldots,M\}\,,\;\forall k\in\{1,\ldots,K_m\}\,.
        \end{align*}

 \item \label{ass:fishInf} For \gls{fim} under hypothesis $\Hyp_m$, it holds that
    \begin{align*}
        P(0<\Tr(\fishInfMtxInv[\param_m]{m})<\infty)=1\,,\quad & \forall m\in\{1,\ldots,M\}\,.
    \end{align*}
  \item\label{ass:posKL} For all $m\in\{1,\ldots,M\}$, all $k\in\{1,\ldots,M\}\setminus m$, all $\param_m$, $\param_k$ it holds that
 \begin{align*}
  0<\KL{ p(\obsSingle\given\Hyp_m,\param_m)}{ p(\obsSingle\given\Hyp_k,\param_k) } < \infty\,.
 \end{align*}
 \item\label{ass:localLipschitz} Let $\trueParam$ denote the true parameter and let $U$ denote an open neighborhood of $\trueParam$. Then, there exist a square integrable function $f_{\trueParam}(\obsSingle)$ such that for all $\param, \param^\prime\in U$ it holds that
 \begin{align*}
  \biggl\lvert \log \frac{ p(\obsSingle\given\trueHyp,\param)}{ p(\obsSingle\given\trueHyp,\param^\prime)}\biggr\rvert \leq & f_{\trueParam}(\obsSingle) \|\param-\param^\prime\| \quad  P(\cdot\given\trueHyp,\trueParam)- \text{a.s.}\,.
 \end{align*}
 In other words, $\log p(\mathbf{x}\,|\,\mathrm{H}_{m^\star},\boldsymbol\theta)$ is locally Lipschitz continuous over $U$.
\item\label{ass:klTaylor} The \gls{kl} has a second order Taylor-expansion around $\trueParam$, i.e.,
    \begin{align*}
\E\left[ \log\frac{p(X\given\trueHyp,\trueParam)}{p(X\given\trueHyp,\param_{\trueHypIdx})}\bbgiven \trueHyp, \trueParam\right]  & =  \frac{1}{2}(\trueParam-\param_{\trueHypIdx})^\top \mathbf{A}(\trueParam-\param_{\trueHypIdx}) \\
     & \phantom{{}={}}+ o(\|\trueParam-\param_{\trueHypIdx}\|^2)\,,
    \end{align*}
    where $\mathbf{A}$ is a positive definite matrix.
 
\end{assenum}
\cref{ass:posKL} guarantees the proper convergence of $p(\Hyp_m\,|\, \param_m, \obsSeq)$, whereas \cref{ass:localLipschitz,ass:klTaylor} guarantee the proper convergence of $p(\param_m\given\Hyp_m,\obsSeq)$.
The validity of the assumptions is theoretically shown in some cases (see supplement \cite{reinhard2022Supplement}).
\subsection{Fundamentals of Sequential Joint Detection and Estimation}
To solve a joint detection and estimation problem, one has to find a decision rule as well as a set of estimators, i.e., one estimator under each hypothesis.
The decision rule $\decR[n]\in\{1,\ldots,M\}$ maps the observations to a decision in favor of a particular hypothesis and the estimators $\est{m}$ map the observations to a point in the state space of a particular parameter.
As indicated by the subscript, the decision rule and the estimators depend on the number of samples $n$, which also determines the dimensionality of their arguments.
In a sequential framework, the number of samples is not given \emph{a priori}, but depends on the observations themselves.
More precisely, one would stop sampling as soon as the certainty about the true hypothesis and the true parameter is strong enough.
Therefore, a stopping rule $\stopR[n]\in\{0,1\}$, that maps the observations to a decision whether to stop or continue sampling, has to be found.
The run-length, i.e., the number of used samples, can hence be defined as
\begin{equation*}
 \tau = \min\{n\geq 0: \stopR(\obsSeq)=1\}\,.
\end{equation*}
Since the run-length depends on the random data, the run-length is also a random variable.

In what follows, the set of the stopping rule, the decision rule and the estimators is referred to as policy, and is defined as $\pi = \{\stopR, \decR, \est{1},\ldots,\est{M}\}_{n\in\natNumbersInclZero}$.
and the set of all feasible policies is denoted by $\policySet$.

In this work, we use similarly to \cite{reinhard2019bayesian, reinhard2021multiple}, the probability of falsely rejecting a hypothesis and the \gls{MSE} for quantifying the detection and estimation errors, respectively.
The former is defined as
$
\detErr = \E\Bigl[\ind{\decR[\tau]\neq m}\Bgiven\Hyp_m\Bigr]\,,
$
with $\{\decR \neq m\} := \{\obsSeq: \decR(\obsSeq) \neq m\}$.
For the latter, we set the estimation error to zero in case of a wrong decision, which can be written as
\begin{equation*}
 \estErr = \E\Bigl[\ind{\decR[\tau]= m}\bigl\|\est[\tau]{m} - \paramRV_m\bigr\|^2\Bgiven\Hyp_m\Bigr]\,.
\end{equation*}

\subsection{Sequential Joint Detection and Estimation as an Optimization Problem}
As mentioned before, the aim is to design a sequential scheme that uses on average as few samples as possible while keeping the detection and estimation errors below predefined levels.
More formally, the design problem can be formulated as the following constrained optimization problem
\begin{align}\label{eq:constrProblem}
\begin{split}
    \min_{\policy\in\policySet}\; \E[\tau]\,, \quad
    \text{s.t.} \quad & \detErr \leq \detConstr_m\,, \quad\estErr \leq \estConstr_m\,, \quad  m\in\{1,\ldots,M\}\,, \\
 \end{split}
\end{align}
where $\detConstr_m\in(0,1)$ and $\estConstr_m\in(0,\infty)$ are the maximum tolerated detection and estimation errors, respectively.

Deriving and implementing the estimators and the decision rule that solve \cref{eq:constrProblem} is quite straightforward, whereas it becomes challenging for the corresponding stopping rule.
Therefore, the constrained problem in \cref{eq:constrProblem} is solved by the following steps.
First, it is reduced to an optimal stopping problem in \cref{sec:reductionOptimalStopping}.
Next, an asymptotically optimal stopping rule, which depends on a set of parameters, is derived in \cref{sec:ao}.
Finally, in \cref{sec:optCostCoeff}, a systematic way how to choose these parameters such that all constraints of \cref{eq:constrProblem} are fulfilled with equality is presented.
Note that although the constraints are fulfilled with equality, the proposed method does not minimize the expected number of samples exactly.
However, as shown in \cref{sec:numResults}, the proposed method requires only slightly more samples on average than its optimal counterpart, i.e., the performance is close to optimal.

\subsection{Reduction to an Optimal Stopping Problem}\label{sec:reductionOptimalStopping}
In order to reduce \cref{eq:constrProblem} to an optimal stopping problem, it is first converted to an unconstrained problem, whose objective is a weighted sum of average run-length, detection and estimation errors.
This unconstrained problem acts as an auxiliary problem.
Next, the unconstrained optimization problem is solved with respect to the decision rule and estimators to end up with an optimal stopping problem.
Although the steps are already presented in \cite{reinhard2019bayesian, reinhard2021multiple}, this section is added for the sake of completeness and for introducing the quantities that are required for the subsequent sections.

The unconstrained version of \cref{eq:constrProblem} is given by
\begin{align}\label{eq:unconstrProblem}
    \min_{\policy\in\policySet}\;\left\{\E[\tau] + \sum_{m=1}^M p(\Hyp_m)(\detCost[m]\detErr + \estCost[m]\estErr)\right\}\,,
\end{align}
where $\detCost[m],\estCost[m]>0, m\in\{1,\ldots,M\}$, are some cost coefficients, which are assumed to be fixed for now.
For suitably chosen $\detCost[m],\estCost[m], m\in\{1,\ldots,M\}$, the solution of \cref{eq:unconstrProblem} also solves \cref{eq:constrProblem}.
Problem \cref{eq:unconstrProblem} is first minimized with respect to the decision rule and then with respect to the estimators.
The optimal estimators and the optimal decision rule are given by
\begin{align*}
 \estOpt{m}(\obsSeq) = \E[\paramRV_m\given\Hyp_m,\obsSeq]\,,\,\quad\text{and}\quad
 \decR^\star(\obsSeq) = \argmin_{m\in\{1,\ldots,M\}} D_{m,n}(\obsSeq)\,,
\end{align*}
where $D_{m,n}(\obsSeq)$ denotes the cost for deciding in favor of $\Hyp_m$ at time $n$.
Let the error covariance matrix be defined as
\begin{align*}
\errorCovMatrix(\obsSeq) = \E[(\paramRV_m-\estOpt{m}(\obsSeq ))(\paramRV_m-\estOpt{m}(\obsSeq ))^\top ]\,.
\end{align*}
Then, the cost $D_{m,n}(\obsSeq)$ is defined as
\begin{align}\label{eq:defD}
D_{m,n}(\obsSeq) &= \sum_{i=1, i\neq m}^M \detCost p(\Hyp_i\given\obsSeq)  +\estCost[m] p(\Hyp_m\given\obsSeq)\Tr(\errorCovMatrix(\obsSeq))\,.
\end{align}
Finally, we end up with the following optimization problem
\begin{align}\label{eq:optStopProb}
 \min_{\{\stopR\}_{n\geq 0}}\; \E\left[\tau+g(\obsSeq[\tau])\right]\,,
\end{align}
with the cost function
\begin{align}\label{eq:defG}
 g(\obsSeq) = \min\{D_{1,n}(\obsSeq),\ldots,D_{M,n}(\obsSeq)\}\,.
\end{align}

Solving \cref{eq:optStopProb} is usually a challenging task in the design of optimal sequential procedures.
In \cite{reinhard2019bayesian, reinhard2021multiple}, for example, the problem is truncated, i.e., the maximum number of samples is restricted to a fixed constant $N$, and the truncated optimal stopping problem is solved via backward induction.

In this work, however, we do not seek for an optimal solution, but rather for stopping rules that are optimal in an asymptotic sense.
The concept of \gls{ao} policies is introduced in the next section.
  \section{Asymptotically Optimal Procedures}\label{sec:ao}
Before it can be shown how to define an \gls{ao} stopping rule in the context of this work, some fundamentals on \gls{ao} stopping rules have to be revisited.
For this, we consider a different model that is widely used in the context of (Bayesian) sequential inference \cite{bickel1967asymptotically,bickel1968asymptotically,ghosh2011sequential}.
Let
\begin{align}\label{eq:optStopModel}
 Z_n(c) = & nc + Y_n
\end{align}
be a sequence of random variables, where $c>0$ is the cost for observing one sample and $Y_n$ is some random variable.
The aim is to find an optimal stopping time for model \cref{eq:optStopModel}, i.e., a stopping time $\tau$ that minimizes $\E[Z_\tau(c)]$.
Deriving a strictly optimal stopping time, and, hence, a strictly optimal stopping rule that generates this stopping time, can be computationally demanding.
If the maximum number of samples is restricted, the optimal stopping rule can as, e.g., in \cite{reinhard2019bayesian,reinhard2021multiple}, be calculated via dynamic programming, which becomes computationally infeasible with increasing dimensionality.
For the solution of high-dimensional optimal stopping problems, see, for example, \cite{becker2019deep} and references therein.
To overcome this, asymptotically optimal stopping rules are considered here.
That is, stopping rules that become optimal when the cost per sample tends to zero, i.e., $c\rightarrow0$.
According to \cite{bickel1968asymptotically}, a stopping rule is called \glsfirst{ao} if there exists no other feasible stopping rule that results \emph{on average} in smaller costs when the cost per sample tends to zero.
The formal definition is stated below.
\begin{definition}[Bickel and Yahav \cite{bickel1968asymptotically}] \label{def:AO_bickel}
 Consider the model in \cref{eq:optStopModel}. Then, a stopping time $t$ is asymptotically optimal if
 \begin{align*}
  \frac{\E[Z_{t(c)}(c)]}{\inf_{s\in T} \E[Z_s(c)]} \xrightarrow{a.s.} 1\quad \text{as}\; c\rightarrow 0\,,
\end{align*}
 where $T$ is the set of all feasible stopping times.
\end{definition}

Contrary to model \cref{eq:optStopModel}, in which the cost per observation $c$ is the free parameter, the model in \cref{eq:optStopProb} has a fixed unit cost per observation.
Instead, the cost function $g(\obsSeq)$ in \cref{eq:optStopProb} is parameterized by the coefficients $\detCost[m], \estCost[m]\geq0$, $m\in\{1,\ldots,M\}$.
As outlined in \cite{reinhard2019bayesian,reinhard2021multiple}, these coefficients can be chosen systematically such that the resulting policy also solves \cref{eq:constrProblem}.
Since the nominal detection and estimation error levels are the free parameters during the design process, it is more intuitive to formulate the asymptotic regime in terms of these levels rather than in terms of the cost coefficients.
That is, the term asymptotically refers to the case when the levels of all constraints tend to zero, which is equivalent to $\max\{\detConstr_1,\ldots,\detConstr_M,\estConstr_1,\ldots,\estConstr_M\}\rightarrow 0$.
Loosely speaking, tightening a constraint implies increasing the corresponding coefficients.
In \cite[Appendix F]{reinhard2019bayesian}, it is shown that some constraints may be implicitly fulfilled by other constraints and, thus the corresponding coefficient becomes zero.
However, in this work, it is assumed that all coefficients are strictly positive.
Nevertheless, the coefficients can be chosen arbitrarily small and, hence, the optimal policy can be approximated arbitrarily well.

In this context, a stopping time $t$ is called \gls{ao} if
\begin{align}\label{eq:AO_sjde}
\frac{\E[t + g(\obsSeq[t])]}{\inf_{s\in T} \E[s + g(\obsSeq[s])]} \xrightarrow{a.s.} 1
\qquad\text{as}\quad \max\{\detConstr_1,\ldots,\detConstr_M, \estConstr_1,\ldots,\estConstr_M\}\; \rightarrow 0\,,
 \end{align}where $T$ is the set of all feasible stopping times.

Before the \gls{ao} stopping rule is derived, a variant of \cref{eq:optStopProb} is introduced.
Let $\bar{c} = \left(\divCost\right)^{-1}$ and 
\begin{align}\label{eq:defModG}
 \gDiv(\obsSeq) = \left(\divCost\right)^{-1}g(\obsSeq)\,,
\end{align}
then \cref{eq:optStopProb} is equivalent to finding an optimal stopping rule for
\begin{align}\label{eq:auxOptStopProblem}
 \bar{c} n + \gDiv(\obsSeq)\,.
\end{align}
As \cref{eq:auxOptStopProblem} and \cref{eq:optStopProb} differ only up to a constant, finite, and non-zero scaling factor, they share the same minimizer.
The reason for considering this problem will become clear in the sequel.

In what follows, $\trueHyp$ and $\trueParam$ denote the true hypothesis and the true parameter, respectively.
Moreover, the short-hand notations
\begin{align*}
  \detCostDiv[m] & = \frac{\detCost[m]}{\divCost}\,, \quad \estCostDiv[m] &= \frac{\estCost[m]}{\divCost}\,,
 \end{align*}
for $m\in\{1,\ldots,M\}$ are used.
As a first step, it has to be shown that the cost function $\gDiv(\obsSeq)$ tends to zero almost surely as the number of observations tends to infinity.
Moreover, it has to be shown that $n\gDiv(\obsSeq)$ converges to a finite and non-zero random variable.
This is stated in the following theorem.
\begin{theorem}\label{theo:prop}
 Let $\gDiv(\obsSeq)$ be as defined in \cref{eq:defModG}.
 Then, when $\detCost[m],\estCost[m]>0$, $m\in\{1,\ldots,M\}$, and the assumptions stated in \cref{sec:assumptions} are fulfilled, it holds that
 \begin{enumerate}
  \item $P(\gDiv(\obsSeq)>0) = 1$
  \item $\gDiv(\obsSeq) \xrightarrow{a.s.} 0 $
  \item $n\gDiv(\obsSeq)\xrightarrow{a.s.} G = \estCostDiv[\trueHypIdx]\Tr(\fishInfMtxInv{\trueHypIdx})$.
  It further holds that $P(0<G<\infty)=1$.
 \end{enumerate}
\end{theorem}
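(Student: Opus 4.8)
The plan is to reduce all three claims to the asymptotic behaviour of the three building blocks of $D_{m,n}(\obsSeq)$ under $P(\cdot\given\trueHyp,\trueParam)$: the wrong-hypothesis posteriors $p(\Hyp_i\given\obsSeq)$ for $i\neq\trueHypIdx$, the true-hypothesis posterior $p(\trueHyp\given\obsSeq)$, and the posterior error covariance $\Tr(\errorCovMatrix[\trueHypIdx])$. For claim~1, since $g(\obsSeq)=\min_m D_{m,n}(\obsSeq)$ and every $D_{m,n}$ is a sum of non-negative terms, it suffices to show $D_{m,n}>0$ almost surely for each $m$. I would split into two cases: if $p(\Hyp_m\given\obsSeq)<1$, then by exhaustiveness (\cref{ass:exhaustive}) some $p(\Hyp_i\given\obsSeq)>0$ with $i\neq m$, so the detection sum is strictly positive because $\detCost[i]>0$; if instead $p(\Hyp_m\given\obsSeq)=1$, the estimation term reduces to $\estCost[m]\Tr(\errorCovMatrix)$, which is strictly positive since $\estCost[m]>0$ and, by \cref{ass:finiteSecondOrder}, the posterior covariance of a non-degenerate parameter is positive definite for finite $n$. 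In both cases $g(\obsSeq)>0$, hence $\gDiv(\obsSeq)>0$.

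For claim~2, I would bound $0\le g(\obsSeq)\le D_{\trueHypIdx,n}(\obsSeq)$ and show the upper bound vanishes term by term. Posterior consistency of the hypothesis test, which follows from the strictly positive cross-hypothesis divergences of \cref{ass:posKL} together with conditional independence, gives $p(\Hyp_i\given\obsSeq)\xrightarrow{a.s.}0$ for $i\neq\trueHypIdx$ and $p(\trueHyp\given\obsSeq)\xrightarrow{a.s.}1$, while consistency of the posterior-mean estimator under $\trueHyp$ gives $\Tr(\errorCovMatrix[\trueHypIdx])\xrightarrow{a.s.}0$. Substituting these into $D_{\trueHypIdx,n}$ forces every term to $0$, so $\gDiv(\obsSeq)\xrightarrow{a.s.}0$.

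For claim~3 the key observation is that the minimizing index stabilizes at $\trueHypIdx$. For $m\neq\trueHypIdx$ the detection sum of $D_{m,n}$ contains the term $\detCost[\trueHypIdx]p(\trueHyp\given\obsSeq)\xrightarrow{a.s.}\detCost[\trueHypIdx]>0$, so $D_{m,n}$ is bounded away from $0$, whereas $D_{\trueHypIdx,n}\to 0$; hence eventually $g(\obsSeq)=D_{\trueHypIdx,n}(\obsSeq)$ almost surely, and it remains to evaluate $\lim_n nD_{\trueHypIdx,n}$. The wrong-hypothesis posteriors entering $D_{\trueHypIdx,n}$ decay exponentially, since the relevant log-likelihood ratios drift to $-\infty$ at a linear rate controlled by the positive divergences of \cref{ass:posKL}; therefore $n\sum_{i\neq\trueHypIdx}\detCost[i]p(\Hyp_i\given\obsSeq)\xrightarrow{a.s.}0$. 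The estimation term requires a Bernstein--von Mises step: I would establish $n\Tr(\errorCovMatrix[\trueHypIdx])\xrightarrow{a.s.}\Tr(\fishInfMtxInv{\trueHypIdx})$, i.e.\ the $n$-scaled posterior covariance converges to the inverse Fisher information, using the local regularity of \cref{ass:localLipschitz} and the quadratic KL expansion of \cref{ass:klTaylor}. Together with $p(\trueHyp\given\obsSeq)\to1$ this yields $nD_{\trueHypIdx,n}\xrightarrow{a.s.}\estCost[\trueHypIdx]\Tr(\fishInfMtxInv{\trueHypIdx})$, and multiplying by $\bar{c}=(\divCost)^{-1}$ gives $n\gDiv(\obsSeq)\xrightarrow{a.s.}\estCostDiv[\trueHypIdx]\Tr(\fishInfMtxInv{\trueHypIdx})=G$. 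The statement $P(0<G<\infty)=1$ then follows at once from $\estCost[\trueHypIdx]>0$ and \cref{ass:fishInf}.

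The main obstacle is the Bernstein--von Mises step, that is, upgrading the distributional concentration of the rescaled posterior to convergence of its second moment $n\Tr(\errorCovMatrix[\trueHypIdx])$. This demands a uniform-integrability argument so that the posterior covariance, an expectation over the parameter, genuinely converges to $\Tr(\fishInfMtxInv{\trueHypIdx})$ rather than the posterior merely concentrating; this is precisely where the square-integrable Lipschitz envelope of \cref{ass:localLipschitz}, the finite prior variance of \cref{ass:finiteSecondOrder}, and the quadratic expansion of \cref{ass:klTaylor} must be combined, and extra care is needed to make the convergence almost sure rather than only in probability.
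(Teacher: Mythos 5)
Your proposal follows essentially the same route as the paper: all three claims are reduced to posterior consistency of the hypothesis posteriors with exponential rate (driven by \cref{ass:posKL}; the paper's \cref{lem:convPostH}) together with a Bernstein--von Mises statement giving $n\Tr(\errorCovMatrix[\trueHypIdx])\xrightarrow{a.s.}\Tr(\fishInfMtxInv{\trueHypIdx})$ (the paper's \cref{lem:convPostVar}), after which the minimum over $m$ is shown to stabilize at $\trueHypIdx$. The uniform-integrability concern you flag when upgrading posterior concentration to convergence of the scaled second moment is precisely the step the paper delegates to the misspecified Bernstein--von Mises theorem of Kleijn and van der Vaart under \cref{ass:localLipschitz} and \cref{ass:klTaylor}, so your sketch matches the published argument in both structure and substance.
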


Before we can prove \cref{theo:prop}, two auxiliary results about the convergence of the posterior probability $p(\Hyp_m\given\obsSeq)$ and the error covariance marix $\errorCovMatrix$ are introduced.

\begin{lemma}\label{lem:convPostH}
 For $\seqDataRV$ be defined as above, it holds that 
 \begin{align*}
  p(\Hyp_m\given \obsSeq) \xrightarrow{a.s.}
    \begin{cases}
        1 & m = \trueHypIdx \\
        0 & \text{else}\,.
    \end{cases}
 \end{align*}
 Moreover, the posterior distribution converges exponentially.
\end{lemma}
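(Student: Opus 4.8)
The plan is to prove the claim via Bayes' rule and a large-deviations analysis of the prior-integrated (marginal) likelihoods. Writing
\begin{align*}
 p(\Hyp_m \given \obsSeq) = \frac{p(\Hyp_m)\, p(\obsSeq\given\Hyp_m)}{\sum_{k=1}^M p(\Hyp_k)\, p(\obsSeq\given\Hyp_k)}\,, \quad p(\obsSeq\given\Hyp_m) = \int p(\obsSeq\given\Hyp_m,\param_m)\,p(\param_m\given\Hyp_m)\,\dInt\param_m\,,
\end{align*}
the statement reduces to showing that, for every $k \neq \trueHypIdx$, the marginal likelihood ratio $p(\obsSeq\given\Hyp_k)/p(\obsSeq\given\trueHyp)$ decays to zero exponentially fast, $P(\cdot\given\trueHyp,\trueParam)$-almost surely. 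Since there are only finitely many hypotheses, dividing numerator and denominator by $p(\obsSeq\given\trueHyp)$ then forces $p(\trueHyp\given\obsSeq)\to 1$ and $p(\Hyp_k\given\obsSeq)\to 0$ at the same geometric rate.

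First I would bound the denominator from below by restricting the prior integral to a small ball $U$ around $\trueParam$ and invoking the local Lipschitz bound \cref{ass:localLipschitz}: for $\param\in U$ with $\|\param-\trueParam\|\le\smallNonNegNumber$,
\begin{align*}
 \log p(\obsSeq\given\trueHyp,\param) \ge \log p(\obsSeq\given\trueHyp,\trueParam) - \smallNonNegNumber\sum_{i=1}^n f_{\trueParam}(\obsSingle_i)\,.
\end{align*}
As $f_{\trueParam}$ is square integrable, hence integrable, the strong law of large numbers gives $\tfrac1n\sum_i f_{\trueParam}(\obsSingle_i)\to\E[f_{\trueParam}(\RVsingle)\given\trueHyp,\trueParam]=:C<\infty$, so that $\tfrac1n\log\bigl(p(\obsSeq\given\trueHyp)/p(\obsSeq\given\trueHyp,\trueParam)\bigr)\ge -\smallNonNegNumber C$ eventually; to exponential order the denominator is thus no smaller than $p(\obsSeq\given\trueHyp,\trueParam)$, up to a factor $e^{-n\smallNonNegNumber C}$ with $\smallNonNegNumber$ at our disposal.

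The crux is the upper bound on the wrong-hypothesis marginal $p(\obsSeq\given\Hyp_k)$, which I expect to be the main obstacle because the nuisance parameter $\param_k$ must be controlled \emph{uniformly} inside the integral rather than pointwise. From
\begin{align*}
 \frac{p(\obsSeq\given\Hyp_k,\param_k)}{p(\obsSeq\given\trueHyp,\trueParam)} = \exp\Bigl(-\textstyle\sum_{i=1}^n \log\tfrac{p(\obsSingle_i\given\trueHyp,\trueParam)}{p(\obsSingle_i\given\Hyp_k,\param_k)}\Bigr)\,,
\end{align*}
the strong law gives, for each fixed $\param_k$, a decay rate $\KL{p(\obsSingle\given\trueHyp,\trueParam)}{p(\obsSingle\given\Hyp_k,\param_k)}$ that is strictly positive by \cref{ass:posKL}. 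To upgrade this to a uniform statement I would introduce the pseudo-true parameter $\pseudoTrueParam$ minimizing this divergence over $\stateSpaceParam[k]$, cover the parameter space by finitely many balls, apply the Lipschitz control of \cref{ass:localLipschitz} within each ball together with a uniform law of large numbers, and bound the integral by its supremum times the prior mass; alternatively, a Chernoff/Markov bound on $\E[(p(\obsSeq\given\Hyp_k)/p(\obsSeq\given\trueHyp,\trueParam))^s\given\trueHyp,\trueParam]$ for some $s\in(0,1)$ combined with Borel--Cantelli yields the same conclusion. Either route produces a uniform exponent $\delta_k>0$ with $\tfrac1n\log\bigl(p(\obsSeq\given\Hyp_k)/p(\obsSeq\given\trueHyp,\trueParam)\bigr)\le -\delta_k$ eventually.

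Combining the two bounds, and choosing $\smallNonNegNumber$ small enough that $\smallNonNegNumber C<\delta_k$, yields $\tfrac1n\log\bigl(p(\obsSeq\given\Hyp_k)/p(\obsSeq\given\trueHyp)\bigr)\le -(\delta_k-\smallNonNegNumber C)<0$ almost surely for every $k\neq\trueHypIdx$, so the posterior odds $p(\Hyp_k\given\obsSeq)/p(\trueHyp\given\obsSeq)$ vanish geometrically; normalization over the finitely many hypotheses then forces $p(\trueHyp\given\obsSeq)\to 1$ and $p(\Hyp_k\given\obsSeq)\to 0$ at an exponential rate, which is precisely the asserted exponential convergence.
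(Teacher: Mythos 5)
Your proposal is correct in strategy and shares the paper's skeleton — Bayes' rule plus the strong law of large numbers applied to the log-likelihoods, with \cref{ass:posKL} supplying the strictly positive exponential decay rate for every wrong hypothesis — but it executes the key step quite differently. The paper computes the almost-sure limit of $\tfrac1n\log p(\obsSeq\given\Hyp_m,\param_m)$ \emph{pointwise} in $\param_m$ and then pushes these limits through the prior integrals without justification; read literally, that interchange would give $0$ for the denominator as well (the set where the divergence vanishes is prior-null), so the paper's evaluation of the limiting integrals is really an unstated Laplace-type approximation. You replace this with honest two-sided bounds: a lower bound on $p(\obsSeq\given\trueHyp)$ by localizing the prior integral near $\trueParam$ and invoking \cref{ass:localLipschitz} together with the integrability of $f_{\trueParam}$, and an upper bound on $p(\obsSeq\given\Hyp_k)$ that is uniform in the nuisance parameter. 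This is the classical Wald/Schwartz posterior-consistency route, and it is more rigorous than the paper precisely where the paper is weakest. The one caveat is your uniform exponent $\delta_k$: \cref{ass:posKL} gives only pointwise positivity of $\KL{p(\obsSingle\given\trueHyp,\trueParam)}{p(\obsSingle\given\Hyp_k,\param_k)}$ over $\param_k\in\stateSpaceParam[k]$, and neither compactness of $\stateSpaceParam[k]$ nor a moment condition enabling the Chernoff/Borel--Cantelli alternative is among the stated assumptions, so the finite-cover step is asserted rather than derived. Your lower bound also tacitly needs the prior to charge every neighborhood of the realized $\trueParam$, which holds prior-almost-surely but deserves a sentence. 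Both gaps are inherited from (indeed, hidden inside) the paper's own argument, so your version should be regarded as a strictly more careful treatment whose remaining holes are those of the original.
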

See \cref{ext:proof:convPostH} in \cite{reinhard2022Supplement} for a proof.

\begin{lemma}\label{lem:convPostVar}
  For $\seqDataRV$ be defined as above, it holds that 
\begin{align*}
  n\Tr(\errorCovMatrix[\trueHypIdx]) & \xrightarrow{a.s.}
        \Tr(\fishInfMtxInv{\trueHypIdx})\,,\quad
  \Tr(\errorCovMatrix[\trueHypIdx]) \xrightarrow{a.s.} 0\,.
 \end{align*}
\end{lemma}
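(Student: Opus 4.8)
The plan is to recognise $\errorCovMatrix[\trueHypIdx]$ as the covariance of $\paramRV_{\trueHypIdx}$ under the posterior $p(\param_{\trueHypIdx}\given\trueHyp,\obsSeq)$, with $\estOpt{\trueHypIdx}$ its mean, and to prove a Bernstein--von Mises (Bayesian central limit) statement for that posterior. Under $\trueHyp$ the samples are \gls{iid} from $p(\obsSingle\given\trueHyp,\trueParam)$, so the posterior is correctly specified, and the first claim will follow once the posterior law of $u=\sqrt{n}\,(\paramRV_{\trueHypIdx}-\estOpt{\trueHypIdx})$ is shown to converge to $\norm{0}{\fishInfMtxInv{\trueHypIdx}}$ \emph{together with its second moment}: the posterior covariance of $u$ is exactly $n\,\errorCovMatrix[\trueHypIdx]$, so taking traces yields $n\Tr(\errorCovMatrix[\trueHypIdx])\xrightarrow{a.s.}\Tr(\fishInfMtxInv{\trueHypIdx})$.

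First I would establish posterior concentration around $\trueParam$. \cref{ass:klTaylor} shows that the within-hypothesis \gls{kl} between $p(\cdot\given\trueHyp,\trueParam)$ and $p(\cdot\given\trueHyp,\param_{\trueHypIdx})$ is locally quadratic with a positive leading constant, hence strictly positive for $\param_{\trueHypIdx}$ near but distinct from $\trueParam$; together with the non-degenerate, finite-variance prior of \cref{ass:finiteSecondOrder} this forces the posterior mass to concentrate almost surely in shrinking neighbourhoods of $\trueParam$, so in particular $\estOpt{\trueHypIdx}\xrightarrow{a.s.}\trueParam$. This concentration is of the same flavour as the exponential convergence used for \cref{lem:convPostH}.

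Next I would establish local asymptotic normality. The local Lipschitz bound of \cref{ass:localLipschitz} with the square-integrable envelope $f_{\trueParam}$ is the standard sufficient condition for differentiability in quadratic mean which, combined with the quadratic expansion of \cref{ass:klTaylor}, expands $\log\bigl(p(\obsSeq\given\trueHyp,\trueParam+u/\sqrt{n})\,/\,p(\obsSeq\given\trueHyp,\trueParam)\bigr)$ into a linear term plus $-\tfrac12\,u^\top\fishInfMtx{\trueHypIdx}u$ with a remainder that vanishes uniformly on compact sets. Substituting this expansion into the reparametrised posterior and renormalising shows that the density of $u$ converges to the $\norm{0}{\fishInfMtxInv{\trueHypIdx}}$ density, i.e.\ the Bernstein--von Mises limit holds.

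The remaining, and in my view hardest, step is to upgrade this distributional convergence to convergence of the second moment, because $n\Tr(\errorCovMatrix[\trueHypIdx])$ integrates the unbounded functional $\|u\|^2$ against the posterior, which total-variation convergence alone does not control. Here I would split the integral at the boundary of a fixed neighbourhood $U$ of $\trueParam$: inside $U$ the quadratic expansion controls $\|u\|^2$ and supplies the limit $\Tr(\fishInfMtxInv{\trueHypIdx})$, while outside $U$ the finite prior variance of \cref{ass:finiteSecondOrder} bounds the contribution, which is then annihilated by the vanishing posterior mass from the concentration step. Since \cref{ass:fishInf} guarantees $0<\Tr(\fishInfMtxInv{\trueHypIdx})<\infty$, this completes the first claim. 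The second claim is then immediate: $n\Tr(\errorCovMatrix[\trueHypIdx])$ converges almost surely to a finite limit while $n\to\infty$, which forces $\Tr(\errorCovMatrix[\trueHypIdx])\xrightarrow{a.s.}0$.
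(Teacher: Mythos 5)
Your proposal follows the same core route as the paper -- both arguments rest on a Bernstein--von Mises statement for the posterior of $\paramRV_{\trueHypIdx}$ under the correctly specified model, using \cref{ass:localLipschitz} and \cref{ass:klTaylor} as the regularity conditions -- so in that sense you and the authors are aligned. The difference is in how the key technical step is handled. The paper simply invokes the misspecified Bernstein--von Mises theorem of Kleijn and van der Vaart, specializes it to the well-specified case so that the limiting covariance is $n^{-1}\fishInfMtxInv{\trueHypIdx}$, and then directly concludes $n\Tr(\errorCovMatrix[\trueHypIdx])\xrightarrow{a.s.}\Tr(\fishInfMtxInv{\trueHypIdx})$ from the total-variation convergence of the posterior. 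You correctly identify that this last inference is the delicate point: total-variation convergence does not by itself control the integral of the unbounded functional $\|u\|^2$ against the posterior, so moment convergence requires an extra uniform-integrability or truncation argument. Your proposed fix -- splitting at a fixed neighbourhood $U$ of $\trueParam$, using the local quadratic expansion inside $U$ and killing the outside contribution by combining the (exponential) posterior concentration with the finite prior second moment from \cref{ass:finiteSecondOrder} -- is the standard way to close this, and is essentially what Bickel--Yahav-type arguments do when they need convergence of posterior moments rather than posterior laws. The one place your sketch is still loose is the tail bound itself: the factor of $n$ in $n\Tr(\errorCovMatrix[\trueHypIdx])$ means the posterior mass (and second moment) outside $U$ must decay faster than $1/n$, so you need to make the exponential rate of concentration explicit rather than merely ``vanishing''; with that quantification supplied, your argument is complete and, on this step, more careful than the paper's own proof.
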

See \cref{ext:proof:convPostVar} in \cite{reinhard2022Supplement} for a proof.

\begin{proof}[Proof of \cref{theo:prop}]
    First, the statement that $\gDiv(\obsSeq)$ is positive with probability one, is proven.
    The function $\gDiv(\obsSeq)$ is defined as
    \begin{align*}
     \gDiv(\obsSeq) = \min\{\DDiv[1](\obsSeq), \ldots, \DDiv[M](\obsSeq)\}\,,
    \end{align*}
    where $\DDiv[m](\obsSeq) =  \sum_{i=1, i\neq m}^M \detCostDiv p(\Hyp_i\given\obsSeq) +\estCostDiv[m] p(\Hyp_m\given\obsSeq)\Tr(\errorCovMatrix)$.
    The function $\DDiv[m](\obsSeq)$ is a linear combination of the posterior probabilities {$p(\Hyp_i\given\obsSeq)$}, $i\in\{1,\ldots,M\}\setminus m$ and of the product $p(\Hyp_m\given\obsSeq)\Tr(\errorCovMatrix)$.
    The posterior probabilities as well as the posterior variances, i.e., the diagonal elements of the matrix $\errorCovMatrix$, are positive.
    It further holds that $\estCostDiv[m]$ is positive.
    Hence, it follows that $\DDiv(\obsSeq)$ is positive with probability one and, since $\gDiv(\obsSeq)$ is the minimum of all
    $\DDiv(\obsSeq)$, $m\in\{1,\ldots,M\}$, $\gDiv(\obsSeq)$ is also positive with probability one.
    
    To prove Statement 2, we assume that the sequence $\obsSeq$ is generated under hypothesis $\trueHyp$ and consider the limit of the sequence $\DDiv[\trueHypIdx](\obsSeq)$ as $n\rightarrow\infty$.
    According to \cref{lem:convPostH}, $p(\Hyp_m\given\obsSeq)\xrightarrow{a.s.}0$ for all $m\in\{1,\ldots,M\}\setminus\trueHypIdx$ and, hence,
    \begin{align}\label{eq:convDetPenZero}
     \sum_{i=1, \,i\neq \trueHypIdx}^M \detCostDiv p(\Hyp_i\given\obsSeq)\xrightarrow{a.s.}0\,.
    \end{align}
    Next, the product $p(\trueHyp\given\obsSeq)\Tr(\errorCovMatrix[\trueHypIdx])$ needs closer inspection.
    According to \cref{lem:convPostH}, it holds that $p(\trueHyp\given\obsSeq)\xrightarrow{a.s.}1$.
    In line with \cref{lem:convPostH} and \cref{lem:convPostVar}, it further holds that
    \begin{align*}
     \Bigl(p(\trueHyp\given\obsSeq),\Tr(\errorCovMatrix[\trueHypIdx])\Bigr) \xrightarrow{a.s.} \Bigl(1,0\Bigr)\,.
    \end{align*}
    Since the logarithm is a continuous function we can, by applying the continuous mapping theorem \cite{mann1943stochastic,taboga2017}, state that
    \begin{align*}
     \log p(\trueHyp\given\obsSeq) + \log \Tr(\errorCovMatrix[\trueHypIdx]) \xrightarrow{a.s.} -\infty\,.
    \end{align*}
    Applying, again, the continuous mapping theorem yields
    \begin{align*}
     p(\trueHyp\given\obsSeq) \Tr(\errorCovMatrix[\trueHypIdx]) \xrightarrow{a.s.} 0\,.
    \end{align*}
    Thus, we can conclude that
$\DDiv[\trueHypIdx](\obsSeq) \xrightarrow{a.s.} 0 $\,,
which implies $\gDiv(\obsSeq)\xrightarrow{a.s.}0$.
    
    It is left to prove Statement 3, i.e., that $n\gDiv(\obsSeq)$ converges almost surely to a random variable $G$ that is positive and finite with probability one.
    Again, let $\trueHyp$ and $\trueParam$ denote respectively the hypothesis and the parameter under which the sequence $\obsSeq$ is generated.
    \cref{lem:convPostH} states that $p(\Hyp_m\given\obsSeq)\xrightarrow{a.s.}0$ for all $m\neq\trueHypIdx$ and that the posterior probabilities converge at an exponential rate.
    Therefore, it follows that
    \begin{align*}
     \sum_{i=1, i\neq \trueHypIdx}^M n \detCostDiv p(\Hyp_i\given\obsSeq)\xrightarrow{a.s.}0\,.
    \end{align*}
    According to \cref{lem:convPostH} and \cref{lem:convPostVar}, it holds that
    \begin{align*}
     \Bigl(p(\trueHyp\given\obsSeq),n\Tr(\errorCovMatrix[\trueHypIdx])&\Bigr)  \xrightarrow{a.s.} \biggl(1,\Tr(\fishInfMtxInv{\trueHypIdx})\biggr)\,.
    \end{align*}
    From the continuous mapping theorem, we conclude that
    \begin{align*}
     \log p(\trueHyp& \given\obsSeq) + \log\bigl(n\Tr(\errorCovMatrix[\trueHypIdx])\bigr)  \xrightarrow{a.s.} \log 1 + \log \biggl(\Tr(\fishInfMtxInv{\trueHypIdx})\biggr)
    \end{align*}
    and applying the exponential function results in
    \begin{align*}
     p(\trueHyp\given\obsSeq)n\Tr(\errorCovMatrix[\trueHypIdx]) & \xrightarrow{a.s.} \Tr(\fishInfMtxInv{\trueHypIdx})\,.
    \end{align*}
    Hence, we conclude that
    \begin{align*}
     n\DDiv[\trueHypIdx](\obsSeq) \xrightarrow{a.s.} \estCostDiv[\trueHypIdx]\Tr(\fishInfMtxInv{\trueHypIdx})\,.
    \end{align*}
    For $m\neq\trueHypIdx$, it follows from \cref{lem:convPostH} that
    \begin{align*}
     \sum_{i=1, i\neq m}^M n \detCostDiv p(\Hyp_i\given\obsSeq)\xrightarrow{a.s.}\infty
    \end{align*}
    and therefore
$n\DDiv[m](\obsSeq) \xrightarrow{a.s.} \infty\,.$
According to the definition of $\gDiv(\obsSeq)$, it holds that
    \begin{align*}
     n\gDiv(\obsSeq) \xrightarrow{a.s.} G = \estCostDiv[\trueHypIdx]\Tr(\fishInfMtxInv{\trueHypIdx})\,.
    \end{align*}
    It is now left to show that $G$ is positive and finite with probability one.
    According to \cref{ass:fishInf}, the trace of the inverse of \gls{fim} is positive and finite with probability one.
    Moreover, as $\estCostDiv[m]$, $m\in\{1,\ldots,M\}$, are positive, we can conclude that
    $P(0<G<\infty)=1$.
\end{proof}
In order to propose an \gls{ao} stopping rule for the problem of sequential joint detection and estimation, it has to be shown that the expected value of $\gDiv(\obsSeq)$ exists and is finite for all $n>0$.
This is stated in the following lemma.
\begin{lemma}\label{lem:finiteEx}
    Let $\gDiv(\obsSeq)$ be as defined above, then, under the assumptions stated in \cref{sec:assumptions}, it holds that
    \begin{align*}
        \E[\gDiv(\obsSeq)] < \infty\,,\quad \forall n\geq1\,.
    \end{align*}
\end{lemma}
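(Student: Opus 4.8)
The plan is to exploit that $\gDiv(\obsSeq)$ is a minimum of non-negative quantities, so it is dominated pointwise by any single one of them. Writing $\gDiv(\obsSeq) = \min\{\DDiv[1](\obsSeq),\ldots,\DDiv[M](\obsSeq)\}$ with each $\DDiv[m] \geq 0$, I would fix an arbitrary index $m$ and use $\gDiv(\obsSeq) \leq \DDiv[m](\obsSeq)$, hence $\E[\gDiv(\obsSeq)] \leq \E[\DDiv[m](\obsSeq)]$. It then suffices to bound the expectation of a single $\DDiv[m]$, which splits into a detection part $\sum_{i\neq m}\detCostDiv[i]\,p(\Hyp_i\given\obsSeq)$ and an estimation part $\estCostDiv[m]\,p(\Hyp_m\given\obsSeq)\Tr(\errorCovMatrix[m](\obsSeq))$.

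The detection part is controlled trivially: each posterior obeys $p(\Hyp_i\given\obsSeq)\leq 1$ and each normalized coefficient satisfies $\detCostDiv[i]=\detCost[i]/\divCost\leq 1$, so the sum is bounded by the constant $M-1$ uniformly in $\obsSeq$ and $n$, and its expectation is finite. The estimation part is the crux. The key step is a change-of-measure identity: since the outer expectation is taken over the marginal law $p(\obsSeq)$ of the observations, the posterior factor $p(\Hyp_m\given\obsSeq)$ combines with $p(\obsSeq)$ to form the joint law $p(\Hyp_m,\obsSeq)=p(\Hyp_m)\,p(\obsSeq\given\Hyp_m)$, so that
\begin{align*}
 \E\bigl[p(\Hyp_m \given \obsSeq)\,\Tr(\errorCovMatrix[m](\obsSeq))\bigr] = p(\Hyp_m)\,\E\bigl[\Tr(\errorCovMatrix[m](\obsSeq)) \bgiven \Hyp_m\bigr]\,.
\end{align*}
In words, the posterior weight exactly converts the unconditional expectation into one conditioned on $\Hyp_m$, i.e., on the very hypothesis whose posterior error covariance is being evaluated. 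This is the step I expect to carry the argument and the one most easily overlooked.

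Having conditioned on $\Hyp_m$, I would invoke the law of total variance. Since $\estOpt{m}=\E[\paramRV_m\given\Hyp_m,\obsSeq]$ is the posterior mean, one has $\Tr(\errorCovMatrix[m](\obsSeq))=\sum_{k}\Var[\paramRVElement_m^k\given\Hyp_m,\obsSeq]$, and the total-variance decomposition of $\Var[\paramRVElement_m^k\given\Hyp_m]$ into its conditional-expectation and conditional-variance components yields $\E[\Var[\paramRVElement_m^k\given\Hyp_m,\obsSeq]\given\Hyp_m]\leq\Var[\paramRVElement_m^k\given\Hyp_m]$ for every $k$. Summing over $k$ bounds the conditional expectation of the trace by the prior variance trace $\sum_k\Var[\paramRVElement_m^k\given\Hyp_m]$, which is finite by \cref{ass:finiteSecondOrder}. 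As this bound does not depend on $n$, combining the two parts gives $\E[\gDiv(\obsSeq)]<\infty$ for all $n\geq 1$, as claimed.
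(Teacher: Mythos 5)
Your proof is correct and follows essentially the same route as the paper: bound the minimum by a single $\DDiv[m]$, handle the detection part trivially, and for the estimation part use the change-of-measure identity $p(\Hyp_m\given\obsSeq)p(\obsSeq)=p(\Hyp_m)p(\obsSeq\given\Hyp_m)$ to reduce everything to the prior variance, which is finite by \cref{ass:finiteSecondOrder}. The only (minor) difference is in the last step, where you invoke the law of total variance to get the bound $\sum_k\Var[\paramRVElement_m^k\given\Hyp_m]$ directly, while the paper expands the square and applies Jensen's inequality to arrive at the slightly looser bound $2\sum_k\Var[\paramRVElement_m^k\given\Hyp_m]$; both suffice.
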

A proof is given in \cite[\cref{ext:proof:finiteEx}]{{reinhard2022Supplement}}.
Based on these properties, one can show that the stopping rule
\begin{align}\label{eq:stopRule}
 \text{stop as soon as}\quad \frac{g(\obsSeq)}{n+1}\leq1
\end{align}
is \gls{ao}.
This is stated in the following theorem.
\begin{theorem}\label{theo:AO}
 Assume that all coefficients $\detCost[m],\estCost[m]$, $m\in\{1,\ldots,M\}$ are positive. Then, the stopping rule \cref{eq:stopRule} is \gls{ao} in the sense of \cref{eq:AO_sjde}.
\end{theorem}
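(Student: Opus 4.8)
The plan is to recast \cref{eq:auxOptStopProblem} into the canonical Bickel--Yahav model \cref{eq:optStopModel} and then verify the hypotheses under which their stopping rule is \gls{ao}. First I would observe that, since $\bar{c} = (\divCost)^{-1}$ and $\gDiv(\obsSeq) = \bar{c}\, g(\obsSeq)$ by \cref{eq:defModG}, the stopping rule \cref{eq:stopRule} is equivalent to stopping as soon as $\gDiv(\obsSeq) \leq \bar{c}\,(n+1)$. Hence, setting $Y_n := \gDiv(\obsSeq)$ and taking $\bar{c}$ as the cost per sample, \cref{eq:stopRule} is precisely the Bickel--Yahav rule for the model $\bar{c}\,n + \gDiv(\obsSeq)$, which by the remark following \cref{eq:auxOptStopProblem} has the same minimizer as \cref{eq:optStopProb}. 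It remains to match the asymptotic regimes: tightening the constraints drives the coefficients $\detCost[m],\estCost[m]$, and therefore $\divCost$, to infinity, so $\bar{c}\to 0$ exactly as $\max\{\detConstr_1,\ldots,\detConstr_M,\estConstr_1,\ldots,\estConstr_M\}\to 0$. I would carry out the argument conditionally on the realized true hypothesis $\trueHyp$ and true parameter $\trueParam$; under this conditioning $G = \estCostDiv[\trueHypIdx]\Tr(\fishInfMtxInv{\trueHypIdx})$ is a deterministic, strictly positive, finite constant, which is what lends meaning to the almost-sure convergence in \cref{eq:AO_sjde}.

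With this identification, \cref{theo:prop} supplies exactly the pathwise behaviour required by the Bickel--Yahav theory: $Y_n>0$ a.s., $Y_n\xrightarrow{a.s.}0$, and $nY_n\xrightarrow{a.s.}G$ with $0<G<\infty$. Heuristically $Y_n\approx G/n$ for large $n$, so that $\bar{c}\,n+Y_n$ is minimized near $n^\star\approx\sqrt{G/\bar{c}}$ at a value $\approx 2\sqrt{\bar{c}\,G}$; the rule \cref{eq:stopRule} stops when $G/n\approx Y_n\leq\bar{c}(n+1)$, i.e.\ at a time $t(\bar{c})\approx\sqrt{G/\bar{c}}$, thereby hitting the optimal order. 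I would make this precise by establishing, on almost every path, the two pathwise limits $\bar{c}^{-1/2}\bigl(\bar{c}\,t(\bar{c})+Y_{t(\bar{c})}\bigr)\to 2\sqrt{G}$ for the proposed rule and $\bar{c}^{-1/2}\inf_{n}\bigl(\bar{c}\,n+Y_n\bigr)\to 2\sqrt{G}$ for the running minimum. The latter serves as a universal pathwise lower bound, since $\bar{c}\,s+Y_s\geq\inf_n(\bar{c}\,n+Y_n)$ for every stopping time $s\in T$; this is the device that handles the infimum over all of $T$ in the denominator of \cref{eq:AO_sjde}.

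The crux is the passage from these almost-sure, pathwise statements to the ratio of \emph{expectations} that actually appears in \cref{def:AO_bickel} and \cref{eq:AO_sjde}. Lifting $\bar{c}^{-1/2}(\bar{c}\,t(\bar{c})+Y_{t(\bar{c})})\xrightarrow{a.s.}2\sqrt{G}$ to convergence of its expectation requires uniform integrability of the family $\{\sqrt{\bar{c}}\,t(\bar{c})\}_{\bar{c}>0}$, equivalently a uniform-in-$\bar{c}$ tail bound on the stopping time; here \cref{lem:finiteEx}, which guarantees $\E[\gDiv(\obsSeq)]<\infty$, is the key integrability ingredient, but it must be combined with a moment or domination bound on $t(\bar{c})$ to control large excursions. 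On the denominator side I would apply Fatou's lemma to the running minimum to obtain $\liminf_{\bar{c}\to0}\bar{c}^{-1/2}\inf_{s\in T}\E[\bar{c}\,s+Y_s]\geq 2\sqrt{G}$, while the upper bound for the proposed rule yields the matching $\limsup$, so that both numerator and denominator are asymptotic to $2\sqrt{\bar{c}\,G}$ and their ratio tends to $1$. I expect the uniform-integrability step to be the main obstacle; the remainder is a verification that the hypotheses of the Bickel--Yahav theorem \cite{bickel1968asymptotically} hold, after which their result can be invoked directly.
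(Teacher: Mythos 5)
Your skeleton is the same as the paper's: set $Y_n=\gDiv(\obsSeq)$ with per-sample cost $\bar{c}=\left(\divCost\right)^{-1}$, observe that \cref{eq:stopRule} is exactly the Bickel--Yahav rule for $\bar{c}\,n+\gDiv(\obsSeq)$, feed the three statements of \cref{theo:prop} into their conditions, and translate $\max\{\detConstr_1,\ldots,\detConstr_M,\estConstr_1,\ldots,\estConstr_M\}\rightarrow0$ into $\bar{c}\rightarrow0$. The paper does precisely this and then invokes \cite[Theorems 2.1 and 3.1]{bickel1968asymptotically} rather than re-deriving their machinery.

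The genuine gap is the step you yourself flag as ``the main obstacle'' and then leave open: lifting the pathwise limits to the ratio of \emph{expectations} in \cref{def:AO_bickel}. You aim this at uniform integrability of the family $\{\sqrt{\bar{c}}\,t(\bar{c})\}$ and a ``moment or domination bound on $t(\bar{c})$'' that you do not supply; but that is not the condition one needs to check, and no route to it is given. The hypothesis that actually closes the argument is the one in \cite[Theorem 3.1]{bickel1968asymptotically}, namely $\sup_n n\E[\gDiv(\obsSeq)]<\infty$ --- a condition on the cost sequence alone, not on the stopping times. The paper verifies it by combining \cref{lem:finiteEx} (finiteness of $\E[\gDiv(\obsSeq)]$ for each $n$) with the almost sure convergence $n\gDiv(\obsSeq)\rightarrow G$, $P(0<G<\infty)=1$, from \cref{theo:prop}; Bickel and Yahav's theorem then packages exactly the APO-to-AO upgrade you were trying to reconstruct by hand. (Your instinct that the almost-sure-to-expectation passage is where care is needed is sound --- the paper's own check of the supremum condition leans on that same passage --- but the proof is incomplete until that specific, checkable condition is identified and verified.) A secondary point: conditioning on $(\trueHyp,\trueParam)$ so that $G$ becomes deterministic is unnecessary and does not match \cref{eq:AO_sjde}, which is stated unconditionally; the Bickel--Yahav results allow $nY_n$ to converge to a genuinely random limit that is positive and finite with probability one, which is exactly how \cref{theo:prop} states it.
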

\begin{proof}
    To prove this theorem, it is first shown that \begin{align}\label{eq:stopRule_gen}
        \text{stop as soon as}\quad \frac{\gDiv(\obsSeq)}{n+1}\leq\bar{c}
    \end{align}
    is \gls{ao} in the sense of \cref{def:AO_bickel}.
    In order for \cref{eq:stopRule_gen} to be \gls{ao}, the conditions in \cite[Theorem 2.1.]{bickel1968asymptotically} and \cite[Theorem 3.1.]{bickel1968asymptotically} have to be fulfilled.
    As $\detCost[m],\estCost[m]>0$, $m\in\{1,\ldots,M\}$, the conditions of \cite[Theorem 2.1.]{bickel1968asymptotically} are fulfilled by \cref{theo:prop}.
    Hence, it is left to show that the condition stated in \cite[Theorem 3.1.]{bickel1968asymptotically}, i.e.,
    \begin{align}\label{eq:sup_exp}
     \sup n \E[\gDiv(\obsSeq)] <\infty\,,
    \end{align}
    is fulfilled.
    According to \cref{lem:finiteEx}, the expectation of $\gDiv(\obsSeq)$ is finite.
    This implies that \cref{eq:sup_exp} holds as long as $n$ is finite.
    In \cref{theo:prop}, it is shown that $ng(\obsSeq)$ converges to a random variable $G$ that is finite with probability one, which implies that $n\E[\gDiv(\obsSeq)]<\infty$ for $n\rightarrow \infty$.
    Therefore, \cref{eq:sup_exp} is true and \cref{eq:stopRule_gen} is \gls{ao} in the sense of \cref{def:AO_bickel} if the cost coefficients $\detCostDiv[m], \estCostDiv[m]$ are finite for all $m\in\{1,\ldots,M\}$.
Since $\detCostDiv[m]$ and $\estCostDiv[m]$ stay positive and finite when the tolerated detection and estimation error levels tend to zero, the stopping rule \cref{eq:stopRule_gen} is \gls{ao} in the sense of \cref{def:AO_bickel}.
    
        It is left to show that $\detConstr^m,\estConstr^m\rightarrow0$ implies that $\detCost[m],\estCost[m]\rightarrow\infty$.
When using the optimal decision rule and the optimal estimators, $\detErr,\estErr\rightarrow0$ can only be achieved if the number of samples tends to infinity, i.e., $\tau\rightarrow\infty$, since we assume non-zero and finite Fisher's information and KL-divergences according to \cref{ass:fishInf,ass:posKL}.
When using the proposed stopping rule stated in \cref{eq:stopRule}, an infinite run-length can be achieved if and only if the cost for stopping tends to infinity, i.e., $g(\obsSeq)\rightarrow\infty$ for all $\obsSeq$.
From the definition
\begin{align*}
 g(\obsSeq) = \min\{D_{1,n}(\obsSeq),\ldots,D_{M,n}(\obsSeq)\}\,,
\end{align*}
one can see that $g(\obsSeq) \rightarrow \infty$ for all $\obsSeq$ holds if and only if $D_{m,n}(\obsSeq)\rightarrow\infty$ for all $m\in\{1,\ldots,M\}$ and all $\obsSeq$.
From \cref{eq:defD}, one can see that $D_{m,n}(\obsSeq)\rightarrow\infty$ implies that $\detCost[m],\estCost[m]\rightarrow\infty$, since the posterior probabilities are finite by definition and $\Tr(\errorCovMatrix(\obsSeq))$ is finite according to \cref{ass:fishInf}.
Hence, a procedure only meets the nominal error levels $\detConstr^m,\estConstr^m\rightarrow0$ if $\detCost[m],\estCost[m]\rightarrow\infty$.
    By the definition of $\bar{c}$, it follows that $\bar{c}\rightarrow 0$ if the nominal error levels tend to zero.
    Hence, the stopping rule \cref{eq:stopRule} is \gls{ao} in the sense of \cref{eq:AO_sjde}.
\end{proof}

 \section{Optimal Choice of the Cost Coefficients}\label{sec:optCostCoeff}
In the previous section, it was assumed that all coefficients tend to infinity such that the proposed procedure is \gls{ao}.
However, the ultimate goal is to choose the coefficients such that all constraints in \cref{eq:constrProblem} are fulfilled and, assuming that the resulting coefficients are sufficiently large, the resulting procedure is close to optimal.
This section addresses the problem of how to select the coefficients optimally, i.e., such that all constraints are fulfilled.

In \cite{reinhard2019bayesian,reinhard2021multiple}, we have shown that there is a strong connection between the derivatives of the optimal cost function with respect to these coefficients and the performance measures.
However, it is not obvious that such a relation exists also for the \gls{ao} stopping rule proposed in this work.
In what follows, let $\stopRopt$ denote the optimal stopping rule presented in \cite{reinhard2021multiple} and let $\optRL$ denote the corresponding stopping time.
The stopping rule and stopping time of the \gls{ao} procedure are denoted by $\stopRao$ and $\aoRL$, respectively.

The connection between the derivatives of the \gls{ao} solution of the optimal stopping problem with respect to the coefficients and the performance measures is stated in the following theorem.
\begin{theorem}\label{theo:derivatives}
 Let $\stopRao$ denote the \gls{ao} stopping rule defined in \cref{eq:stopRule} and $\aoRL$ the corresponding stopping time.
 Then, it holds for $\max\{\detConstr_1,\ldots,\detConstr_M, \estConstr_1,\ldots,\estConstr_M\}\; \rightarrow 0$ that
\begin{align*}
  \frac{\partial}{\partial\detCost[m]} \E[\aoRL + g(\obsSeq[\aoRL])] \rightarrow p(\Hyp_m)\detErr\,,\quad
  \frac{\partial}{\partial\estCost[m]} \E[\aoRL + g(\obsSeq[\aoRL])] \rightarrow p(\Hyp_m)\estErr\,.
 \end{align*}
\end{theorem}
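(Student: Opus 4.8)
The plan is to exploit that, in contrast to the strictly optimal rule, the \gls{ao} rule \cref{eq:stopRule} is an \emph{explicit} threshold rule, so that $\aoRL$ and $g(\obsSeq[\aoRL])$ are explicit functions of the coefficients and can be differentiated directly. The first observation is that the posterior probabilities $p(\Hyp_i\given\obsSeq)$ and the error covariance matrices $\errorCovMatrix$ are fixed by the model and do \emph{not} depend on $\detCost[i],\estCost[i]$; only the explicit coefficients entering $D_{m,n}$ in \cref{eq:defD} do. Since $g(\obsSeq)=\min_m D_{m,n}(\obsSeq)$ is a minimum of functions that are affine in the coefficients, it is differentiable almost everywhere, and, writing $\decR^\star=\argmin_k D_{k,n}$ for the cost-minimizing decision,
\begin{align*}
 \frac{\partial g}{\partial\detCost[m]}(\obsSeq) &= p(\Hyp_m\given\obsSeq)\,\ind{\decR^\star\neq m}\,,\\
 \frac{\partial g}{\partial\estCost[m]}(\obsSeq) &= p(\Hyp_m\given\obsSeq)\,\Tr(\errorCovMatrix)\,\ind{\decR^\star= m}\,,
\end{align*}
because $\detCost[m]$ multiplies $p(\Hyp_m\given\obsSeq)$ in every $D_{k,n}$ with $k\neq m$ and is absent from $D_{m,n}$, while $\estCost[m]$ multiplies $p(\Hyp_m\given\obsSeq)\Tr(\errorCovMatrix)$ only in $D_{m,n}$.

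Next I would write the objective as $\E[\aoRL+g(\obsSeq[\aoRL])]=\E[\sum_{n\geq0}\stopAt(n+g(\obsSeq))]$, with $\stopAt$ the stopping indicators of the \gls{ao} rule, and split its derivative into a \emph{direct} part, obtained by differentiating $g$ with the stopping region held fixed, and a \emph{boundary} part collecting the dependence of $\aoRL$ on the coefficients. The direct part evaluates $\partial g/\partial\detCost[m]$ at the stopping time,
\begin{align*}
 \E\!\left[\frac{\partial g}{\partial\detCost[m]}(\obsSeq[\aoRL])\right]=\E\!\left[p(\Hyp_m\given\obsSeq[\aoRL])\,\ind{\decR[\aoRL]\neq m}\right]\,,
\end{align*}
and since $p(\Hyp_m\given\obsSeq[\aoRL])=\E[\ind{\Hyp_m}\given\obsSeq[\aoRL]]$ and the indicator $\ind{\decR[\aoRL]\neq m}$ is $\obsSeq[\aoRL]$-measurable, the tower property turns this into $p(\Hyp_m)\,P(\decR[\aoRL]\neq m\given\Hyp_m)=p(\Hyp_m)\detErr$, which is exactly the right-hand side of the asserted limit. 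The analogous computation for $\estCost[m]$, using $\Tr(\errorCovMatrix)=\E[\|\paramRV_m-\estOpt[\aoRL]{m}\|^2\given\Hyp_m,\obsSeq[\aoRL]]$, gives $p(\Hyp_m)\estErr$. The same two identities, applied to $g(\obsSeq[\aoRL])$ itself, yield $\E[g(\obsSeq[\aoRL])]=\sum_m p(\Hyp_m)(\detCost[m]\detErr+\estCost[m]\estErr)$, recovering the penalty of \cref{eq:unconstrProblem} and serving as a useful consistency check.

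It then remains to show that the boundary part vanishes as $\max\{\detConstr_1,\ldots,\detConstr_M,\estConstr_1,\ldots,\estConstr_M\}\to0$, and this is the main obstacle. For the strictly optimal rule $\stopRopt$ the corresponding boundary term is \emph{exactly} zero, because at the optimal stopping boundary the costs of stopping and of continuing match; this is precisely how the analogous identities were obtained in \cite{reinhard2019bayesian,reinhard2021multiple}. The \gls{ao} threshold $g(\obsSeq)=n+1$ does not coincide with the optimal boundary, so its boundary term is not identically zero, and I would control it using \cref{theo:prop} and \cref{theo:AO}. At the \gls{ao} stopping instant one has $g(\obsSeq[\aoRL])\in(\aoRL,\aoRL+1]$, and since $n\,g(\obsSeq)\to G$ with $0<G<\infty$ by \cref{theo:prop}, the stopping time grows, $\aoRL\to\infty$, and its law spreads over a widening range as the constraints tighten. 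The boundary contribution is a sum, over the levels $n$ that the threshold can cross, of the value jump incurred by stopping one sample later, weighted by the probability of crossing at that level; as the stopping time grows this per-level crossing probability tends to zero, so the boundary term becomes negligible relative to the direct term. Equivalently, asymptotic optimality (\cref{theo:AO}) forces the \gls{ao} boundary to agree with the optimal one to leading order, so the exactly-vanishing boundary term of $\stopRopt$ is recovered in the limit.

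Finally, I would justify interchanging differentiation and expectation: finiteness of $\E[g(\obsSeq)]$ for every $n$ (\cref{lem:finiteEx}), together with $\E[\aoRL]<\infty$ and the square-integrable envelope provided by \cref{ass:localLipschitz}, supplies the domination required for the termwise differentiation above. The genuinely delicate part is the quantitative estimate of the boundary term: bounding the density of $g(\obsSeq)$ near the threshold $n+1$, summing these contributions over $n$ up to the growing stopping time, and showing that the total is of lower order than $p(\Hyp_m)\detErr$ uniformly as the tolerated error levels vanish. I expect this density and summation estimate, rather than the algebra of the direct term, to be where the real work lies.
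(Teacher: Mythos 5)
Your computation of the ``direct'' term is correct and is, in substance, the same algebra the paper uses for the terminal and stopping contributions of its gradient recursion: $\partial g/\partial\detCost[m]=p(\Hyp_m\given\obsSeq)\ind{\decR\neq m}$ and $\partial g/\partial\estCost[m]=p(\Hyp_m\given\obsSeq)\Tr(\errorCovMatrix)\ind{\decR=m}$, and the tower-property step that converts their expectation at $\aoRL$ into $p(\Hyp_m)\detErr$ and $p(\Hyp_m)\estErr$ is sound. But the entire content of the theorem is that the remaining ``boundary'' part --- the sensitivity of the stopping region $\{g(\obsSeq)\leq n+1\}$ to the coefficients --- vanishes in the asymptotic limit, and this is exactly the step you leave open. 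Your argument for it (``the per-level crossing probability tends to zero, so the boundary term becomes negligible'') is not a proof: the number of levels over which you must sum grows with $\aoRL$, the value jump across the AO boundary is not small (unlike at the optimal boundary, where it is exactly zero by first-order optimality, which is why the identity holds exactly for $\stopRopt$), and the rate at which the threshold surface moves with $\detCost[m]$ itself grows with the coefficients. None of these three factors is quantified, and you acknowledge that this ``density and summation estimate'' is where the real work lies. As it stands, the proposal establishes the easy identity and correctly isolates, but does not close, the hard one.

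The paper takes a different route that avoids estimating the boundary term altogether. It truncates both schemes at $N$, expresses the value of an arbitrary stopping rule through the backward recursion $V_n(\cdot\,;\stopR[])$, and argues that $\E[\aoRL+g(\obsSeq[\aoRL])]\rightarrow\E[\optRL+g(\obsSeq[\optRL])]$ forces $\stopRao[n]-\stopRopt[n]\rightarrow 0$ for every $n$, by unrolling the recursion for $\Delta V_n=V_n(\cdot\,;\stopRao[])-V_n(\cdot\,;\stopRopt[])$. Since the gradient of $V_n$ with respect to the coefficients obeys an analogous recursion whose stopping and terminal terms are precisely your direct-term expressions, convergence of the stopping rules implies convergence of the gradients to those of the optimal rule, for which the identity is already known from \cite{reinhard2020multiple}; the limit $N\rightarrow\infty$ is taken at the end. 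To complete your envelope-theorem decomposition you would need either a quantitative bound on the mass of $g(\obsSeq)$ near the moving threshold $n+1$, summed over $n\leq\aoRL$ and shown to be of lower order than $p(\Hyp_m)\detErr$ uniformly as $\max\{\detConstr_1,\ldots,\detConstr_M,\estConstr_1,\ldots,\estConstr_M\}\rightarrow 0$, or to reduce to the paper's argument by showing that the AO and optimal stopping boundaries agree to first order --- which is essentially the stopping-rule convergence the paper establishes.
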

The proof of \cref{theo:derivatives} is outlined in \cite[\cref{ext:app:proofDerivatives}]{reinhard2022Supplement}.
Based on the result stated in \cref{theo:derivatives}, we can proceed as in \cite{reinhard2019bayesian, reinhard2021multiple} and obtain the optimal cost coefficients via 
\begin{align}\label{eq:finalMaxProblem}
 \max_{\detCost[],\estCost[]>0}\;\E[\aoRL + g(\obsSeq[\aoRL])] - \sum_{m=1}^M p(\Hyp_m)(\detCost[m]\detConstr_m + \estCost[m]\estConstr_m)\,.
\end{align}
As it is not trivial to see that strong duality between \cref{eq:constrProblem} and \cref{eq:finalMaxProblem} holds asymptotically, it is fixed in the next theorem.
\begin{theorem}\label{theo:asymptotic_duality}
Let $\detCostOpt[]$, $\estCostOpt[]$ denote the solution of \cref{eq:finalMaxProblem}.
Then for the \gls{ao} policy parameterized by these coefficients it holds that
\begin{align*}
 \detErr & = \detConstr_m,\,\quad
 \estErr = \estConstr_m,\,\quad m\in\{1,\ldots,M\}
\end{align*}
and
\begin{align*}
 \E[\aoRL] \rightarrow \E[\optRL]\quad\text{as}\quad \max\{\detConstr_1,\ldots,\detConstr_M, \estConstr_1,\ldots,\estConstr_M\}\; \rightarrow 0\,.
\end{align*}
That is, a solution of \cref{eq:finalMaxProblem} also solves \cref{eq:constrProblem} asymptotically.
\end{theorem}
The proof of \cref{theo:asymptotic_duality} can be found in \cite[\cref{ext:proof:asymptotic_duality}]{reinhard2022Supplement}.

In order to solve \cref{eq:finalMaxProblem}, we propose a projected quasi-Newton method as summarized in \cref{alg:projQuasiNewton}.
\begin{algorithm}[t]
    \caption{Projected Quasi-Newton Method}
    \label{alg:projQuasiNewton}
 \begin{algorithmic}[1]
\STATE \textbf{input} \par
\hskip\algorithmicindent Nominal levels $\detConstr_m, \estConstr_m$, $m\in\{1,\ldots,M\}$ and initial coefficients $\detCost[]^{(0)}, \estCost[]^{(0)}$
\STATE \textbf{initialization}:
$k \leftarrow 0$
        \REPEAT
\STATE Perform Monte Carlo simulation and obtain gradients via \cref{eq:gradDetMC} and \cref{eq:gradEstMC}\vspace{2mm}
        \STATE  $ \hat\nabla^{(k)} \leftarrow - 
                                                \left[
\gradDetMC^{(k)} \quad 
                                                \gradEstMC^{(k)} 
\right]^\top$
            \vspace{2mm}
            \IF {$k=0$}
            \STATE  $\invHess^{(k)} = \frac{1}{\lVert \hat\nabla^{(k)} \rVert}\mathbf{I}$
            \ELSE
            \vspace{2mm}
            \STATE $\mathbf{s} \leftarrow \left[ 
{\detCost[]}^{(k)} \quad
                                                {\estCost[]}^{(k)}  
\right]^\top
                                                -
                                    \left[ 
{\detCost[]}^{(k-1)} \quad
                                                {\estCost[]}^{(k-1)}  
\right]^\top
                                                $
            \vspace{2mm}
            \STATE $ \mathbf{y} \leftarrow \hat\nabla^{(k)} - \hat\nabla^{(k-1)}$
            \IF {$k=1$}
                \vspace{2mm}
                \STATE $\invHess^{(k-1)}\leftarrow \frac{\mathbf{y}^\top \mathbf{s}}{\mathbf{y}^\top \mathbf{y}}\mathbf{I}$
                \vspace{2mm}
            \ENDIF
            \vspace{2mm}
            \STATE $\invHess^{(k)} = \bigl(\mathbf{I} - \frac{\mathbf{s} \mathbf{y}^\top}{\mathbf{y}^\top s} \bigr) \invHess^{(k-1)} \bigl(\mathbf{I} - \frac{\mathbf{y} \mathbf{s}^\top}{\mathbf{y}^\top s} \bigr) + \frac{\mathbf{s}\mathbf{s}^\top}{\mathbf{y}^\top \mathbf{s}}$
            \vspace{2mm}
            \ENDIF
            \vspace{2mm}
\STATE $\left[
                                                {\detCost[]}^{(k+1)}\quad
                                                {\estCost[]}^{(k+1)}
                                                \right]^\top
                                                = \max\left\{
                                    \left[ {\detCost[]}^{(k)}\quad
                                                {\estCost[]}^{(k)}
\right]^\top - \invHess^{(k)} \hat\nabla^{(k)},\; \smallNonNegNumber \right\}$
        \vspace{2mm}
        \STATE $\gamma\leftarrow 1$
        \WHILE{\cref{eq:full_condition} fulfilled}
                \STATE $\gamma\leftarrow\gamma_\text{red}\gamma$
        \ENDWHILE
        \STATE $k\leftarrow k+1$
        \UNTIL convergence
\end{algorithmic}
 \end{algorithm}
The gradients of \cref{eq:finalMaxProblem} with respect to $\detCost[]$ and $\estCost[]$, which depend on the detection and estimation error levels, are given in \cite[Eq. \cref{ext:eq:grad}]{reinhard2022Supplement}.
As the \gls{ao} procedure is not truncated, calculating the gradients directly is not possible.
Therefore, the gradients have to be estimated via Monte Carlo simulations.
Let $\detErrMC$ and $\estErrMC$ denote the Monte Carlo estimate of $\detErr$ and $\estErr$, respectively.
Then, the estimates of the gradients become
\begin{align}
 \label{eq:gradDetMC}\gradDetMC & = [\,p(\Hyp_1)(\detErrMC[1]-\detConstr_1)\,,\,\ldots\,,\,p(\Hyp_M)(\detErrMC[M]-\detConstr_M)\,]^\top\,,\\
 \label{eq:gradEstMC}\gradEstMC & = [\,p(\Hyp_1)(\estErrMC[1]-\estConstr_1)\,,\,\ldots\,,\,p(\Hyp_M)(\estErrMC[M]-\estConstr_M)\,]^\top\,. 
\end{align}
Although calculating the gradients via Monte Carlo simulations might seem computationally demanding, they are also required for evaluating the objective in \cref{eq:finalMaxProblem}.
Hence, estimating the gradients comes at no additional computational costs compared to the evaluation of the objective.
The estimates are then used to update the inverse of the Hessian matrix $\invHess$ via the \gls{bfgs} rule \cite[Eq. (6.17)]{nocedal2006numerical}.
To get the cost coefficients at the current iteration, the ones from the previous iteration are updated via the step $\invHess^{(k)}\hat\nabla^{(k)}$.
Choosing the step size is an important aspect in many optimization algorithms.
Usually one either performs an exact or an inexact line search.
In the former, the objective is minimized in the search direction, whereas in the latter the objective is only approximately minimized in the search direction or a sufficient decrease is found.
To evaluate the objective or the gradients, a Monte Carlo simulation must be performed. This is the most computationally costly part of the optimization algorithm. Therefore, the number of objective and gradient evaluations should be kept to a minimum.
For the problem at hand, we resort to a backtracking line search.
Commonly used conditions for accepting a step like Wolfe's or  Armijo's conditions turned out to be too conservative for the problem at hand.
This means that the aforementioned rules reject a step too often, resulting in many Monte Carlo simulations being required.
In this work, a custom step size selection rule is applied that is tailored to the problem at hand.
Let
\begin{align*}
    f\left(\left[\detCost[] \quad \estCost[]\right]^\top\right) = -\left(\widehat{\E[\tau]} + \sum_{m=1}^M p(\Hyp_m) \detCost[m](\detErrMC-\detConstr_m) + \sum_{m=1}^M p(\Hyp_m) \estCost[m](\estErrMC-\estConstr_m)\right)
\end{align*}
denote the objective when a policy parametrized by $\detCost[]$ and $\estCost[]$ is used.
Then, a candidate step size $\gamma$ is accepted if the objective decreases, i.e.,
\begin{align}\label{eq:decrease}
 f\left(\left[ {\detCost[]}^{(k+1)}\quad
                                                {\estCost[]}^{(k+1)}
\right]^\top\right) \leq f\left(\left[ {\detCost[]}^{(k)}\quad
                                                {\estCost[]}^{(k)}
\right]^\top\right)\,.
\end{align}
A rejected step size is then reduced according to
\begin{align}\label{eq:step_size_red}
    \gamma\leftarrow\gamma_\text{red}\gamma
\end{align}
 with $\gamma_\text{red}\in(0,1)$.
 Prior to this, the step size is reduced according to \cref{eq:step_size_red} as long as the length of the resulting step is above a maximum length $\gamma_\text{max}$, i.e.,
 \begin{align}
  \label{eq:step_size_max}
  \|\gamma\invHess^{(k)}\hat\nabla^{(k)}\| > \gamma_\text{max}\,.
 \end{align}
Hence, no Monte Carlo simulations, which is the most costly part of this procedure, is performed for too large steps.
Additionally, to avoid non-meaningful procedures, i.e., procedures that always reject a particular hypothesis, the step is further rejected if
 \begin{align}\label{eq:max_det_err_one}
    \max\{\detErrMC[1],\ldots,\detErrMC[M]\} = 1
 \end{align}
  even if this would recude the final objective.
  Finally, to prevent an infinite loop in the step size reduction, it is only applied if the length of the step is sufficiently large, i.e.,
 \begin{align}\label{eq:step_size_red_stop}
    \|\gamma\invHess^{(k)}\hat\nabla^{(k)}\| > \gamma_\text{min}\,.
 \end{align}
 The condition for rejecting a step is if
 \begin{align}
    \label{eq:full_condition}
  \cref{eq:step_size_max} \;\text{or}\; \cref{eq:max_det_err_one}\;\text{or}\;\cref{eq:step_size_red_stop}
 \end{align}
are fulfilled.
This rule for step size selection worked well for the examples presented in this work.
Finally, to ensure that the coefficients are strictly positive, they are projected onto the set $[\smallNonNegNumber,\infty)$, where $\smallNonNegNumber$ is a small positive number, e.g., $\smallNonNegNumber=10^{-12}$.
These steps are repeated until convergence, i.e., until
\begin{align}
\label{eq:stopCrit}
 \lvert \detErrMC - \detConstr_m \rvert & \leq \detTol
 \quad \text{and} \quad 
 \lvert \estErrMC - \estConstr_m \rvert \leq \estTol
\end{align}
hold for all $m\in\{1,\ldots,M\}$.
The positive tolerances $\detTol$ and $\estTol$ have to be set by the designer in advance.
The stopping criterion only depends on the estimated error levels $\detErrMC, \estErrMC$, which are anyway required in any step, as well as the nominal error levels $\detConstr_m,\estConstr_m$ and the tolerances $\detTol,\estTol$.
Therefore, \eqref{eq:stopCrit} can be evaluated straightforwardly without the need for further calculations.
 \section{Numerical Results}\label{sec:numResults}
In this section, we apply the proposed theory to three problems.
The first one is simple and is used to show the basic properties of the resulting policy and to compare it to the optimal policy from \cite{reinhard2021multiple}.
The remaining examples are used to show the applicability of the proposed method to more complex problems.
If not otherwise stated, the parameters outlined in \cite[Appendix A]{reinhard2022Supplement} are used for the design of the \gls{ao} procedure.

\subsection{Benchmarking Methods}
There exist only little related work in the field of sequential joint detection and estimation and the optimal procedure proposed in \cite{reinhard2021multiple} is only applicable in case a low-dimensional representation of the data exists.
Therefore, we use two-step procedures, one sequential and one using a fixed number of samples, for benchmarking.
First a sequential two-step procedure as in, e.g., \cite{reinhard2021multiple}, is used as a reference for comparison.
This method is referred to as S-TS.
That is, we use a sequential detector followed by an \gls{mmse} estimator.
Although there exist different sequential tests for multiple hypotheses \cite{baum1994sequential,tartakovsky2014sequential}, we resort to the \gls{msprt}\cite[Section 4.1]{tartakovsky2014sequential} due to its easy implementation and nice asymptotic properties.
The \gls{msprt} uses the pairwise log-likelihood ratios between $\Hyp_m$ and $\Hyp_j$ for $m\neq j$
\begin{align*}
 \llr_{mj}(\obsSeq) = \log\Biggl(\frac{p(\obsSeq\given\Hyp_m)}{p(\obsSeq\given\Hyp_j)}\Biggr)\,,\; &m,j\in\{1,\ldots,M\}\,.
\end{align*}
The stopping rule and the decision rule of the \gls{msprt} are given by \cite[Eqs. (4.3) and (4.4)]{tartakovsky2014sequential}
\begin{align*}
\stopR^\text{MSPRT} & = \begin{cases}
                         1 & \exists m: \llr_{mj}\geq A_{mj}\,,\; \forall j\in\{1,\ldots,M\}\setminus m\,,\\
                         0 & \text{else}\,,
                        \end{cases}\\
\decR^\text{MSPRT} & = \biggl\{m: \llr_{mj}\geq A_{mj}\,,\; \forall j\in\{1,\ldots,M\}\setminus m \biggr\}\,.
\end{align*}
The stopping rule and the decision rule are parameterized by a set of parameters $A_{mj}, m,j\in\{1,\ldots,M\}$.
In order to keep the error probabilities $P(\{\decR[\tau]\neq m\}\given\Hyp_m)$ below $\detConstr_m$ for all $m\in\{1,\ldots,M\}$, the thresholds can be chosen as \cite[Eq. (4.9)]{tartakovsky2014sequential}
$
 A_{mj} = A_m \approx \log(M-1) - \log(\detConstr_m)\,.
$
The S-TS procecure is neither optimal nor asymptotically optimal for the problem of sequential joint detection and estimation as the stopping rule is determined by the \gls{msprt} which does not take any estimation error into account.
However, the \gls{mmse} estimator is the optimal estimator in the \gls{MSE} sense and, under certain conditions, the \gls{msprt} is asymptotically optimal.
Hence, the S-TS procedure uses an asymptotically optimal detector followed by an optimal estimator.
Nevertheless, the performance is not overall optimal as will be demonstrated by the examples in this section.
This emphasizes the benefits of a joint design.

Second, a two-step procedure using a fixed amount of data, which is referred to as FSS-TS, is used for comparison.
The FSS-TS consists of a fixed-sample size maximum \emph{a posterior} detector, followed by an MMSE estimator.
The decision rule of the former is given by
\begin{align*}
  \delta^\text{FSS}(\obsSeq[n]) = \argmax_m\; p(\Hyp_m\given\obsSeq[n])\,.
\end{align*}
To select the number of samples $n$ for the FSS-TS, the average run-length of the asymptotically optimal procedure is rounded towards the next largest integer.
This procedure, which consists of an optimal detector in the Bayesian sense as well as an optimal estimator, is used to illustrate the performance gap when the number of samples is not adapted to the data.

\subsection{Shift-in-Mean Scenario}
The first example is used to show the basic properties of the proposed policy and to highlight the differences between the \gls{ao} policy and the optimal policy.
The same scenario was already used in \cite{reinhard2021multiple} for the optimal policy.
In that scenario, the aim is to decide among three different hypotheses with equal prior probability.
Under all hypotheses, the likelihood, i.e., the distribution of the data conditioned on the random parameter, is a Gaussian distribution with known variance $\sigma^2$.
Hence, the data distribution depends on a single scalar random parameter whose prior distribution differs under the three hypotheses.
The three hypotheses differ only in the prior distribution of the mean. The priors have a disjoint support.
More formally, the hypotheses are\begin{align*}
  \begin{split}
   \Hyp_1:&\; \RVsingleScalar_n\given\paramScalar_1 \iid \norm{\mu_1}{\sigma^2}\,, -\mu_1 + 1.3 \sim \Gam(1.7,1)\,,\\ \Hyp_2:&\; \RVsingleScalar_n\given\paramScalar_2 \iid \norm{\mu_2}{\sigma^2}\,, \phantom{- + 1.3}\mu_2 \sim \unif(-1,1)\,,\\ \Hyp_3:&\; \RVsingleScalar_n\given\paramScalar_3 \iid \norm{\mu_3}{\sigma^2}\,, \phantom{-}\mu_3 - 1.3 \sim \Gam(1.7,1)\,, \end{split}
\end{align*}
where $\norm{\paramScalar}{\sigma^2}$ is the normal distribution with mean $\paramScalar$ and variance $\sigma^2$, $\unif(l,u)$ is the uniform distribution on the interval $[l,u)$ and $\Gam(a,b)$ is the Gamma distribution with shape and scale parameters $a$ and $b$, respectively.
The variance of the Gaussian distribution is set to $\sigma^2=4$.

The aim is to design a sequential scheme that jointly infers the true hypothesis and the true parameter.
The detection errors should be limited to $\detConstr_1=\detConstr_2=\detConstr_3=0.05$ and the estimation errors should be limited to $\estConstr_1=0.2, \estConstr_2=0.15, \estConstr_3=0.1$.

In addition to the two-step procedures, we use the optimal scheme presented in \cite{reinhard2021multiple} for benchmarking purposes.
As discussed before, optimal policies are only known for truncated schemes. In this example, the truncation length is set to $100$ samples.
However, there are no restrictions on the number of samples for the \gls{ao} scheme.
Moreover, in order to design the optimal procedure and to visualize the optimal and the \gls{ao} policy, a sufficient statistic in the sense of \cite[Assumption A3]{reinhard2021multiple} has to be found.
The sufficient statistic, which serves as a low-dimensional representation of the data, is set to $\bar{x}_n = n^{-1}\sum_{i=1}^n x_i$ and captures all relevant information about the true hypothesis, the true parameter and the next sample, for the shift-in-mean scenario. See \cite[Section 6.2]{reinhard2021multiple} for details.
We refer to \cite[Section 6.2]{reinhard2021multiple} for a description of the remaining parameters used during the design process of the optimal scheme.

The optimal parameters of the \gls{ao} policy are obtained via \cref{alg:projQuasiNewton}, where $10^6$ Monte Carlo runs are used to estimate the gradients.
Moreover, the initial set of cost coefficients and the tolerances are set to $\detCost[m]^{(0)}=\estCost[m]^{(0)}=100$ for all $m\in\{1,\ldots,3\}$ and $\detTol=\estTol=0.005$ for all $m\in\{1,\ldots,3\}$, respectively.
Before we present the results of the \gls{ao} policy, we show that the assumptions are fulfilled for the problem at hand. The proof that \crefrange{ass:fishInf}{ass:klTaylor} hold is laid down in the supplement \cite[\cref{ext:app:sim_assumptions}]{reinhard2022Supplement}.
As the moments of $\paramRVScalar_m\given\Hyp_m$ are finite at least up to order two, the proposed policy is \gls{ao}.

\begin{table}[t]
    \centering
    \caption{Shift-in-mean scenario: simulation results of the asymptotically optimal (AO), optimal (opt) and two-step (TS) procedure.}
    \label{tbl:shiftInMean}
    \subfloat[Detection and estimation errors.\label{tbl:shiftInMean_err}]{\begin{tabular}{@{}rccccc@{}}
	\toprule
& \makecell{nominal \\level} & AO & opt & S-TS & FSS-TS\\
	\midrule

	$\detErr[1]$ & $0.050$ & $0.050$ & $0.050$ & $0.021$ & $0.042$\\
	$\detErr[2]$ & $0.050$ & $0.048$ & $0.050$ & $0.052$ & $0.075$\\
	$\detErr[3]$ & $0.050$ & $0.047$ & $0.051$ & $0.021$ & $0.042$ \\
	\midrule
	$\estErr[1]$ & $0.200$ & $0.199$ & $0.199$ & $0.809$ & $0.151$ \\
	$\estErr[2]$ & $0.150$ & $0.154$ & $0.149$ & $0.181$ & $0.113$\\
	$\estErr[3]$ & $0.100$ & $0.100$ & $0.099$ & $0.805$ & $0.150$ \\

	\bottomrule
\end{tabular}
 }\hfil
    \subfloat[Average run-lengths.\label{tbl:shiftInMean_rl}]{\begin{tabular}{@{}rcccc@{}}
	\toprule
& AO & opt & S-TS & FSS-TS \\
	\midrule

	$\E[\tau\given\Hyp_1]$ & $15.28$ & $15.07$ & $10.06$ & $21.00$ \\
	$\E[\tau\given\Hyp_2]$ & $14.78$ & $14.73$ & $19.21$ & $21.00$ \\
	$\E[\tau\given\Hyp_3]$ & $32.23$ & $31.76$ & $10.19$ & $21.00$ \\
	$\E[\tau]$ & $20.76$ & $20.51$ & $13.16$ & $21.00$ \\

	\bottomrule
\end{tabular}
 }
\end{table}

\begin{figure}[t]
    \centering
    \includegraphics{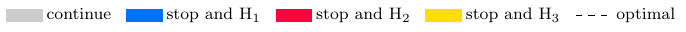}\\
    \includegraphics{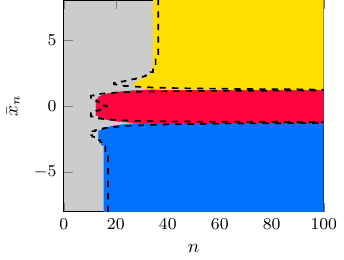}
    \caption{Shift-in-Mean scenario: Comparison of the \gls{ao} policy and the strictly optimal policy.}
    \label{fig:shiftInMean_policy}
\end{figure}

To validate the performance of the \gls{ao} scheme and to compare its performance to the performance of the competitors, a Monte Carlo simulation with $10^6$ runs is performed.
The results are summarized in \cref{tbl:shiftInMean}.
In \cref{tbl:shiftInMean_err}, the detection and estimation errors as well as the nominal levels are summarized.
One can directly see that the \gls{ao} procedure and the optimal procedure fulfill the constraints with equality, despite the Monte Carlo uncertainty during the design process.
For the S-TS procedure, however, the empirical detection errors are much smaller than the nominal ones, whereas the empirical estimation errors violate the constraints.
This is caused by the fact that the stopping time of the S-TS procedure is determined by the stopping rule of the \gls{msprt}, which does not take any uncertainty about the true parameter into account.
For the average run-lengths that are summarized in \cref{tbl:shiftInMean_rl}, one can see that the \gls{ao} and the optimal procedure use almost the same average number of samples.
Finally, for the FSS-TS, one can see that most of the constraints are fulfilled, but the constraints on $\detErr[2]$ and $\estErr[3]$ are violated.
This is caused by the fact that the FSS-TS neither controls the detection nor the estimation errors.

\cref{fig:shiftInMean_policy} shows the policy of the \gls{ao} and the optimal scheme, where the filled areas correspond to the \gls{ao} policy and the dashed line corresponds to the boundaries of the optimal policy.
That is, the boundary between the different regions in the state space, e.g., the region in which the procedure continues sampling and the one in which the procedure stops sampling and decides in favor of $\Hyp_1$.
Although the exact shapes of both policies differ, the general shape of both policies is similar and the \gls{ao} policy looks like a smoothed version of the optimal one.
Even though we made no restrictions on the maximum number of samples for the \gls{ao} policy, the maximum number of samples used by the \gls{ao} policy is limited.
It can be seen that the corridor between the regions for stopping and deciding in favor of $\Hyp_1/\Hyp_2$ closes at around $40$ samples and the corridor between the regions for stopping and deciding in favor of $\Hyp_2/\Hyp_3$ closes at around $32$ samples.
Contrary to this, the corridors in which the optimal scheme continues sampling still exist until the maximum number of samples is reached.
 \subsection{Joint Symbol Decoding and Noise Power Estimation}
\begin{figure}\centering
 \includegraphics{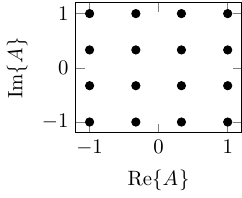}
 \caption{Constellation diagram of the 16-\gls{qam} symbols.}
 \label{fig:QAM_symbols}
\end{figure}
The second example illustrates how to apply the proposed theory to real-world problems for which designing a strictly optimal procedure becomes highly challenging.
In this example, the proposed theory is applied to the problem of joint symbol decoding and noise power estimation.
More precisely, we consider a $16$-\gls{qam} symbol which is transmitted over an \gls{awgn} channel with random noise power.
The aim is to simultaneously infer the transmitted symbol and the noise power using as few samples as possible.
Here, the inference of the transmitted symbol, i.e., the symbol decoding, is formulated as a hypothesis test.
The signal model for the complex-valued received signal is given by
$\obsSingleScalar_n = \QAMsymbol[] + \noiseSingle_n\,,$
where $\QAMsymbol[]\in\{\QAMsymbol[1],\ldots,\QAMsymbol[16]\}$ is the transmitted symbol and $\noiseSingle_n$ is the noise process.
The noise process is assumed to follow a circularly symmetric Gaussian distribution with zero-mean and variance $\sigma^2$.
Hence, the conditional distribution of the received signal is given by
\begin{align}
 \RVsingleScalar_n\given\Hyp_m,\noisePower \iid \complexGauss(\QAMsymbol, \noisePower)\,,
\end{align}
where $\complexGauss(\QAMsymbol, \noisePower)$ denotes the circularly symmetric Gaussian distribution with mean $\QAMsymbol$ and variance $\noisePower$.
The \gls{pdf} of the circularly symmetric Gaussian distribution is $p(\obsSingleScalar\given\Hyp_m,\noisePower) = \frac{1}{\pi\sigma^2}\exp\Bigl(-\frac{\lvert\obsSingleScalar - \QAMsymbol\rvert^2}{\sigma^2}\Bigr)\,$.
The noise power distribution is modelled as an inverse Gamma distribution.
Finally, the hypotheses can be formulated as
\begin{align*}
 \Hyp_m: & \quad \RVsingleScalar_n\given\noisePower \iid \complexGauss(\QAMsymbol, \noisePower)\,, \quad\noisePower\sim \invGam(a,b)\,,
\end{align*}
where $m\in\{1,\ldots,16\}$ and $\invGam(a,b)$ denotes the inverse Gamma distribution with shape and scale parameters $a$ and $b$, respectively.
The \gls{pdf} of the inverse Gamma distribution is given by \cite[Definition 8.22]{barber2012bayesian}
\begin{align*}
 p(\noisePower) = \frac{b^a}{\Gamma(a)}(\sigma^2)^{-a-1}e^{-\frac{b}{\sigma^2}}\,,
\end{align*}
where $\Gamma(\cdot)$ denotes the Gamma function.

In the supplement \cite[\cref{ext:app:QAM_assumptions}]{reinhard2022Supplement}, it is shown that \crefrange{ass:fishInf}{ass:klTaylor} hold.
Furthermore, if $a>2$, the prior distribution of the noise power has finite variance. Hence, the proposed policy is \gls{ao}.

One could, at least in theory, design an optimal
sequential procedure as the sufficient statistic $[\mathrm{Re}\{\bar{x}_n\}, \mathrm{Im}\{\bar{x}_n\}, n^{-1}\sum_{i=1}^n|x_i|^2]$ can serve as a low-dimensional representation of the data in the sense of \cite[Assumption A3]{reinhard2021multiple}.
However, although this sufficient statistic exists and seems to be relatively low-dimensional, designing the optimal scheme on the discretized three-dimensional state space is unlikely to be feasible in practice.
Therefore, we only present results for the \gls{ao} and the two-step procedures.

The following parameters are used in this simulation.
The shape and scale parameters of the prior distribution of the noise power are set to $2.1$ and $0.9$, respectively.
The constellation diagram of the \gls{qam} symbols is depicted in \cref{fig:QAM_symbols}.
Finally, the nominal detection and estimation error levels are set to $0.01$.
The gradients in \cref{alg:projQuasiNewton} are estimated by $10^6$ Monte Carlo runs.
Moreover, the initial set of cost coefficients and the tolerances are set to $\detCost[m]^{(0)}=\estCost[m]^{(0)}=500$, $\detTol=\estTol=0.005$ for all $m\in\{1,\ldots,16\}$, respectively.
\begin{figure}[tp]
 \centering
 \includegraphics{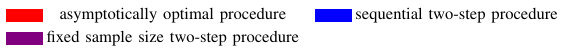}\\
 \vspace{-1em}
 \subfloat[Detection errors.\label{fig:QAM_det}]{\includegraphics{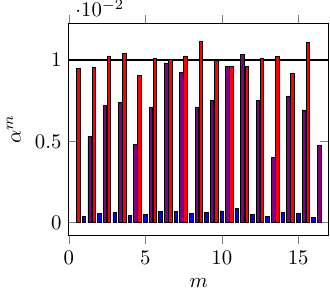}}\hfil
 \subfloat[Estimation errors.\label{fig:QAM_est}]{\includegraphics{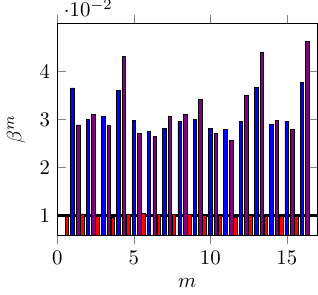}}
 \vspace{-.5em}
 \subfloat[Average run-lengths.\label{fig:QAM_rl}]{\includegraphics{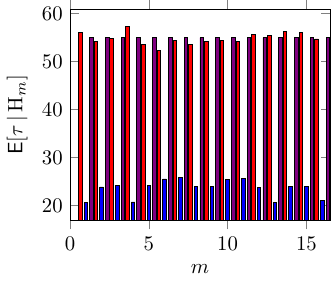}}
 \caption{Joint symbol decoding and noise power estimation: Simulation results.}
 \label{fig:QAM_results}
\end{figure}
To evaluate the performance of the designed procedure, $10^6$ Monte Carlo runs are performed with the \gls{ao} policy and the two-step procedures.
The results are summarized in \cref{fig:QAM_results}.
It can be seen from \cref{fig:QAM_det} and \cref{fig:QAM_est} that the \gls{ao} procedure hits the nominal detection and estimation error levels within the design tolerances.
Contrary to this, the detection errors of the S-TS procedure are much smaller than the tolerated $0.01$.
In addition, the estimation errors of the S-TS procedure procedure are far above the tolerated errors.
This is again due to the fact that the \gls{msprt} which defines the stopping time of the S-TS procedure only takes the uncertainty about the true hypothesis into account and does not consider the uncertainty about the noise power.
From the average run-lengths, which are depicted in \cref{fig:QAM_rl}, one can see that the S-TS procedure is much faster than the \gls{ao} one.
However, this comes at the cost of severely violating the estimation constraints.
For the FSS-TS one can see that almost all detection constraints are fulfilled.
The constraints on the estimation errors, however, are severely violated although the FSS-TS procedure uses a similar amount of data as the \gls{ao} procedure uses on average.

 \subsection{Estimating a Random Parameter in Unknown Noise}\label{sec:num_res_unknown_noise}
A similar version of this example, which is used to show the versatile applicability of the proposed framework, has in a similar version already been published in \cite[Section 4.4.3]{reinhard2021_diss}.
Designing optimal estimators is a common task in many signal processing scenarios.
To do so, the distribution of the stochastic process must be known exactly.
However, in many practical scenarios, the distribution of the noise process is not known exactly and using a wrong statistical model deteriorates the performance of the estimator.
This holds particularly for the sequential case because a wrong statistical model may cause the estimator to stop sampling too early.

In this example, we wish to estimate a random parameter in additive white noise.
The variance of the noise process is assumed to be known, but its distribution is subject to uncertainty.
More precisely, the noise either follows a Gaussian or a Laplacian distribution.
The different distributions can, for example, be caused by different environmental conditions.

A common way of approaching this problem is a two-step procedure that first identifies the noise process and subsequently applies an estimator that is optimal for the identified process.
However, when moving to the sequential case using such a procedure is not a good choice.
Although the noise process can be reliably detected using an appropriate sequential hypothesis test, it is not possible to control the estimation error levels, which are of primary interest in this case.

Besides the two-step approach that only allows to control the detection errors, the framework of sequential joint detection and estimation is well suited for approaching this problem.
Although one is primarily interested in an estimate of the parameter, the identification of the noise process, i.e., performing a hypothesis test, is an intermediate but necessary step.
The framework of sequential joint detection and estimation allows to control the estimation error levels as well as the detection error levels, i.e., the probability that the noise process is misidentified.

In this example, we consider a Gamma distributed random parameter that is embedded in additive white noise, which follows a Gaussian distribution under the null hypothesis and a Laplace distribution under the alternative.
More formally, the two hypotheses are defined as
\begin{align*}
 \Hyp_1: & \; \RVsingleScalar_n \given \paramScalar \sim \norm{\paramScalar}{\sigma^2}\,,\; \paramScalar\sim\Gam(a,b)\,,\\
 \Hyp_2: & \; \RVsingleScalar_n \given \paramScalar \sim \Lapl(\paramScalar,\varsigma)\,,\; \paramScalar\sim\Gam(a,b)\,,
\end{align*}
where $\Lapl(\paramScalar,\varsigma)$ denotes the Laplace distribution with location parameter $\paramScalar$ and scale parameter $\varsigma$.
The \gls{pdf} of the Laplace distribution is given by
\begin{align}
 p(\obsSingleScalar\given\paramScalar,\varsigma) = \frac{1}{2\varsigma}e^{-\frac{\lvert\obsSingleScalar-\paramScalar\rvert}{\varsigma}}\,.
\end{align}
For Gaussian random variables with unknown location parameter it is possible to find a sufficient statistic as shown in \cite{reinhard2021multiple}.
However, this is not possible for a Laplacian distributed random variable with unknown location parameter.
Since the existence of a sufficient statistic is a requirement for designing a strictly optimal procedure, it is not possible to design such a procedure for the problem at hand.

The parameters used in this example are as follows.
Both hypotheses have equal prior probability and the shape and scale parameters of the Gamma distribution are set to $2.1$ and $0.9$, respectively.
The Gaussian noise has a variance of $\sigma^2=4$ and the shape parameter of the Laplacian distribution is set to $\varsigma=\sqrt{0.5\sigma^2}$ so that both noise processes have the same variance.
Moreover, the error probabilities and the \gls{MSE} of the designed \gls{ao} procedure should not exceed $0.05$.

In \cite[\cref{ext:app:diffNoise_assumptions}]{reinhard2022Supplement}, it is shown that \crefrange{ass:fishInf}{ass:klTaylor} hold.
Furthermore, as $a>2$ holds, the prior distribution of the noise power has finite variance and, hence, the proposed policy is \gls{ao}.

During the design process, the gradients are estimated via $10^4$ Monte Carlo runs and the tolerances are set to $\detTol=\estTol=0.005$.
The \gls{ao} procedure and the two-step procedures are validated by Monte Carlo simulation with $10^4$ runs, whose results are summarized in \cref{tbl:ao_diffNoise}.

\begin{table}
    \centering
    \caption{Estimating a random signal under noise uncertainty: simulation results.}
    \label{tbl:ao_diffNoise}
    \subfloat[Detection and estimation errors.\label{tbl:diffNoise_err}]{\begin{tabular}{@{}rccccc@{}}
\toprule
& nominal & tolerance & AO & S-TS & FSS-TS\\
	\midrule
$\detErr[1]$ & $0.050$ & $\pm0.005$& $0.047$ & $0.035$ & $0.094$\\
	$\detErr[2]$ & $0.050$ & $\pm0.005$& $0.050$ & $0.043$ & $0.107$\\
	$\estErr[1]$ & $0.050$ & $\pm0.005$& $0.052$ & $0.090$ & $0.053$ \\
	$\estErr[2]$ & $0.050$ & $\pm0.005$& $0.054$ & $0.078$ & $0.030$ \\
\bottomrule
\end{tabular}
 }\hspace{1cm}
\subfloat[Average run-lengths.\label{tbl:diffNoise_rl}]{\begin{tabular}{@{}rccc@{}}
	\toprule
& \gls{ao} & S-TS & FSS-TS \\
	\midrule
	
	$\E[\tau\given\Hyp_1]$  & $72.27$ & $56.00$ & $61.00$\\
	$\E[\tau\given\Hyp_2]$  & $48.91$ & $50.66$ & $61.00$\\

	$\E[\tau]$ & $60.57$ & $53.32$ & $61.00$\\

	\bottomrule
\end{tabular}
 }
\end{table}
From \cref{tbl:diffNoise_err}, one can see that the \gls{ao} procedure hits the nominal detection and estimation error levels within the design tolerance.
Contrary to this, the empirical error probabilities of the S-TS procedure are smaller than the nominal levels.
However, the S-TS procedure violates the constraints on the estimation errors as its stopping rule does not take the uncertainty about the true parameter into account, although the estimation performance is of primary interest in this example.
This effect is particularly visible under $\Hyp_1$.
The FSS-TS on the other hand, shows a different performance.
The constraints on the detection error levels are severely violated, while the estimation errors are maintained.
Contrary to the sequential procedures which adapt the sample size to the certainty about the phenomenon if interest, the FSS-TS always uses a fixed number of samples.
Therefore, neither the detection nor the estimation error levels can be controlled directly and the error levels are purely determined by the number of samples.
For this example, one can see that the desired constraints on the estimation errors are fulfilled, while the constraints on the detection error are severely violated.

From \cref{tbl:diffNoise_rl}, which summarizes the average sample number of all procedures, one can see that the S-TS procedure uses, on average, significantly less samples than the \gls{ao} procedure.
However, this comes at the cost of violating the estimation constraints.
Especially under the null hypothesis, the difference in the average sample number is large, which is consistent with the fact that the violation of the estimation constraint by the S-TS procedure is most severe under this hypothesis.
Interestingly, under the alternative the \gls{ao} procedure uses on average even less samples than the S-TS procedure while resulting in more accurate estimates.
The FSS-TS procedure uses on average slightly more samples than the \gls{ao} procedure and, hence, it is the procedure that uses on average the most number of samples.

  \section{Conclusions}\label{sec:conclusions}
We have proposed an asymptotically optimal procedure for the problem of sequential joint detection and estimation.
This procedure has been designed such that it fulfills constrains on the detection and estimation errors and, for the case when the nominal detection and estimation errors tend to zero, it minimizes the expected number of used samples.
The proposed asymptotically optimal stopping rule has been obtained by thresholding the instantaneous cost function that is parameterized by a set of cost coefficients.
It has further been shown that similarly to the strictly optimal procedures, there exists a strong connection between the derivatives of the solution of the optimal stopping problem with respect to the cost coefficients and the corresponding performance measures.
By exploiting this connection, a projected quasi-Newton method to optimally choose these coefficients has been introduced.
To validate the proposed theory, three numerical examples have been conducted.
The first example has been used to illustrate the differences compared to the strictly optimal policy.
Moreover, the marginal increase in the expected number of used samples when using an asymptotically optimal procedure has been shown by this example.
The remaining examples have been used to prove the applicability to a scenario for which no low-dimensional representation of the data exists.
In both numerical examples, the proposed method meets the required detection and estimation performance, whereas a two-step procedure using a sequential detector followed by an \gls{mmse} estimator failed to meet the required estimation accuracy.

Future research will consider asymptotically optimal procedures for the general non-\gls{iid} case.
Decomposable graphical models \cite{chen2019testing, chen2021ordering} might be beneficial for addressing this problem.

\appendix
\crefalias{section}{appsec}

\bibliographystyle{elsarticle-num}
\bibliography{IEEEabrv,mrabbrev,references} 
\begin{thebibliography}{10}
\expandafter\ifx\csname url\endcsname\relax
  \def\url#1{\texttt{#1}}\fi
\expandafter\ifx\csname urlprefix\endcsname\relax\def\urlprefix{URL }\fi
\expandafter\ifx\csname href\endcsname\relax
  \def\href#1#2{#2} \def\path#1{#1}\fi

\bibitem{tajer2010Optimal}
A.~Tajer, G.~H. Jajamovich, X.~Wang, G.~V. Moustakides, Optimal joint target
  detection and parameter estimation by {MIMO} radar, {IEEE} J. Sel. Topics
  Signal Process. 4~(1) (2010) 127--145.

\bibitem{chen2018impact}
Y.~Chen, R.~S. Blum, On the impact of unknown signals on delay, doppler,
  amplitude, and phase parameter estimation, {IEEE} Trans. Signal Process.
  67~(2) (2018) 431--443.

\bibitem{Yilmaz2014Sequential}
Y.~Y{\i}lmaz, Z.~Guo, X.~Wang, Sequential joint spectrum sensing and channel
  estimation for dynamic spectrum access, {IEEE} J. Sel. Areas Commun. 32~(11)
  (2014) 2000--2012.

\bibitem{jan2018iterative}
Y.-H. Jan, Iterative joint channel estimation and signal detection for {OFDM}
  system in double selective channels, Wirel. Pers. Commun. 99~(3) (2018)
  1279--1294.

\bibitem{chaari2013fast}
L.~Chaari, T.~Vincent, F.~Forbes, M.~Dojat, P.~Ciuciu, Fast joint
  detection-estimation of evoked brain activity in event-related {fMRI} using a
  variational approach, {IEEE} Trans. Med. Imag. 32~(5) (2013) 821--837.

\bibitem{Makni2008}
S.~Makni, J.~Idier, T.~Vincent, B.~Thirion, G.~Dehaene-Lambertz, P.~Ciuciu, A
  fully {Bayesian} approach to the parcel-based detection-estimation of brain
  activity in {fMRI}, NeuroImage 41~(3) (2008) 941 -- 969.

\bibitem{Momeni2015Joint}
H.~Momeni, H.~R. Abutalebi, A.~Tadaion, Joint detection and estimation of
  speech spectral amplitude using noncontinuous gain functions, {IEEE/ACM}
  Trans. Audio, Speech, Language Process. 23~(8) (2015) 1249--1258.

\bibitem{Vo2010Joint}
B.~N. Vo, B.~T. Vo, N.~T. Pham, D.~Suter, Joint detection and estimation of
  multiple objects from image observations, {IEEE} Trans. Signal Process.
  58~(10) (2010) 5129--5141.

\bibitem{boutoille2010hybrid}
S.~Boutoille, S.~Reboul, M.~Benjelloun, A hybrid fusion system applied to
  off-line detection and change-points estimation, Inform. Fusion 11~(4) (2010)
  325--337.

\bibitem{Middleton1968Simultaneous}
D.~Middleton, R.~Esposito, Simultaneous optimum detection and estimation of
  signals in noise, {IEEE} Trans. Inf. Theory 14~(3) (1968) 434--444.

\bibitem{fredriksen1972simultaneous}
A.~Fredriksen, D.~Middleton, V.~VandeLinde, Simultaneous signal detection and
  estimation under multiple hypotheses, {IEEE} Trans. Inf. Theory 18~(5) (1972)
  607--614.

\bibitem{wald1947sequential}
A.~Wald, Sequential Analysis, Wiley, New York, 1947.

\bibitem{tartakovsky2014sequential}
A.~Tartakovsky, I.~Nikiforov, M.~Basseville, Sequential Analysis: Hypothesis
  Testing and Changepoint Detection, CRC Press, Boca Raton, 2014.

\bibitem{ghosh2011sequential}
M.~Ghosh, N.~Mukhopadhyay, P.~K. Sen, Sequential Estimation, John Wiley \&
  Sons, New York, 1997.

\bibitem{birdsall1973sufficient}
T.~Birdsall, J.~Gobien, Sufficient statistics and reproducing densities in
  simultaneous sequential detection and estimation, {IEEE} Trans. Inf. Theory
  19~(6) (1973) 760--768.

\bibitem{grossi2008sequential}
E.~Grossi, M.~Lops, Sequential detection of {M}arkov targets with trajectory
  estimation, {IEEE} Trans. Inf. Theory 54~(9) (2008) 4144--4154.

\bibitem{buzzi2006joint}
S.~Buzzi, E.~Grossi, M.~Lops, Joint sequential detection and trajectory
  estimation with radar applications, in: Proc. 14th Eur. Signal Process. Conf.
  (EUSIPCO), 2006, pp. 1--5.

\bibitem{grossi2009sequential}
E.~Grossi, M.~Lops, V.~Maroulas, A sequential procedure for simultaneous
  detection and state estimation of {M}arkov signals, in: Proc. IEEE Int. Symp.
  Inf. Theory, 2009, pp. 654--658.

\bibitem{yilmaz2015sequential}
Y.~Y{\i}lmaz, G.~V. Moustakides, X.~Wang, Sequential joint detection and
  estimation, Theory Probab. Appl. 59~(3) (2015) 452--465.

\bibitem{Yilmaz2016Sequential}
Y.~Y{\i}lmaz, S.~Li, X.~Wang, Sequential joint detection and estimation:
  Optimum tests and applications, {IEEE} Trans. Signal Process. 64~(20) (2016)
  5311--5326.

\bibitem{reinhard2019bayesian}
D.~Reinhard, M.~Fau\ss{}, A.~M. Zoubir, {Bayesian} sequential joint detection
  and estimation, Sequential Anal. 37~(04) (2018) 530--570.

\bibitem{reinhard2020multiple}
D.~Reinhard, M.~Fau{\ss}, A.~M. Zoubir, Sequential joint detection and
  estimation with an application to joint symbol decoding and noise power
  estimation, in: Proc. 45th IEEE Int. Conf. Acoust., Speech, Signal Process.
  (ICASSP), 2020.

\bibitem{reinhard2021multiple}
D.~Reinhard, M.~Fau\ss{}, A.~M. Zoubir, {B}ayesian sequential joint detection
  and estimation under multiple hypotheses, Sequential Anal. 41~(2) (2022)
  143--175.

\bibitem{reinhard2019JointSNR}
D.~Reinhard, M.~Fau{\ss}, A.~M. Zoubir, {Bayesian} sequential joint signal
  detection and signal-to-noise ratio estimation, in: Proc. 27th Eur. Signal
  Process. Conf. (EUSIPCO), 2019.

\bibitem{reinhard2020Distributed}
D.~Reinhard, M.~Fau\ss{}, A.~M. Zoubir, Distributed joint detection and
  estimation: {A} sequential approach, in: Proc. 54th Annu. Conf. Inf. Sci.
  Syst. (CISS), Princeton, USA, 2020.

\bibitem{reinhard2020DistributedNonGaussian}
D.~Reinhard, A.~M. Zoubir, Distributed sequential joint detection and
  estimation for non-{G}aussian noise, in: Proc. 28th Eur. Signal Process.
  Conf. (EUSIPCO), 2020.

\bibitem{reinhard2021apo}
D.~Reinhard, M.~Fau{\ss}, A.~M. Zoubir, An asymptotically pointwise optimal
  procedure for sequential joint detection and estimation, in: Proc. 46th IEEE
  Int. Conf. Acoust., Speech, Signal Process. (ICASSP), 2021.

\bibitem{moustakides2012joint}
G.~V. Moustakides, G.~H. Jajamovich, A.~Tajer, X.~Wang, Joint detection and
  estimation: Optimum tests and applications, {IEEE} Trans. Inf. Theory 58~(7)
  (2012) 4215--4229.

\bibitem{reinhard2022Supplement}
D.~Reinhard, M.~Fau\ss{}, A.~M. Zoubir, Asymptotically optimal procedures for
  sequential joint detection and estimation: Supplementing material (2023).

\bibitem{bickel1967asymptotically}
P.~J. Bickel, J.~A. Yahav, Asymptotically pointwise optimal procedures in
  sequential analysis, in: Proc. 5th Berk. Symp. Math. Statist. Probab.,
  Vol.~1, 1967.

\bibitem{bickel1968asymptotically}
P.~J. Bickel, J.~A. Yahav, Asymptotically optimal {B}ayes and minimax
  procedures in sequential estimation, Ann. Math. Stat. (1968) 442--456.

\bibitem{becker2019deep}
S.~Becker, P.~Cheridito, A.~Jentzen, Deep optimal stopping, J. Mach. Learn.
  Res. 20 (2019) 74.

\bibitem{mann1943stochastic}
H.~B. Mann, A.~Wald, On stochastic limit and order relationships, Ann. Math.
  Stat. 14~(3) (1943) 217--226.

\bibitem{taboga2017}
M.~Taboga, Continuous mapping theorem, in: Lectures on probability theory and
  mathematical statistics, 3rd Edition, Kindle Direct Publishing, 2017, online
  appendix.
  \url{https://www.statlect.com/asymptotic-theory/continuous-mapping-theorem}.

\bibitem{nocedal2006numerical}
J.~Nocedal, S.~Wright, Numerical Optimization, Springer Science \& Business
  Media, 2006.

\bibitem{baum1994sequential}
C.~W. Baum, V.~V. Veeravalli, A sequential procedure for multihypothesis
  testing, {IEEE} Trans. Inf. Theory 40~(6) (1994) 1994--2007.

\bibitem{barber2012bayesian}
D.~Barber, Bayesian Reasoning and Machine Learning, Cambridge University Press,
  2012.

\bibitem{reinhard2021_diss}
D.~Reinhard, Sequential joint detection and estimation: Optimal procedures and
  asymptotic results, Ph.D. thesis, Technische Universit{\"a}t, Darmstadt
  (2021).

\bibitem{chen2019testing}
Y.~Chen, R.~S. Blum, B.~M. Sadler, J.~Zhang, Testing the structure of a
  gaussian graphical model with reduced transmissions in a distributed setting,
  {IEEE} Trans. Signal Process. 67~(20) (2019) 5391--5401.

\bibitem{chen2021ordering}
Y.~Chen, R.~S. Blum, B.~M. Sadler, Ordering for communication-efficient
  quickest change detection in a decomposable graphical model, {IEEE} Trans.
  Signal Process. 69 (2021) 4710--4723.

\end{thebibliography}


\begin{thebibliography}{1}
\providecommand{\url}[1]{#1}
\csname url@samestyle\endcsname
\providecommand{\newblock}{\relax}
\providecommand{\bibinfo}[2]{#2}
\providecommand{\BIBentrySTDinterwordspacing}{\spaceskip=0pt\relax}
\providecommand{\BIBentryALTinterwordstretchfactor}{4}
\providecommand{\BIBentryALTinterwordspacing}{\spaceskip=\fontdimen2\font plus
\BIBentryALTinterwordstretchfactor\fontdimen3\font minus
  \fontdimen4\font\relax}
\providecommand{\BIBforeignlanguage}[2]{{%
\expandafter\ifx\csname l@#1\endcsname\relax
\typeout{** WARNING: IEEEtran.bst: No hyphenation pattern has been}%
\typeout{** loaded for the language `#1'. Using the pattern for}%
\typeout{** the default language instead.}%
\else
\language=\csname l@#1\endcsname
\fi
#2}}
\providecommand{\BIBdecl}{\relax}
\BIBdecl

\bibitem{kleijn2012bernstein}
B.~J.~K. Kleijn, A.~W. Van~der Vaart \emph{et~al.}, ``The
  {B}ernstein-von-{M}ises theorem under misspecification,'' \emph{Electron. J.
  Stat.}, vol.~6, pp. 354--381, 2012.

\bibitem{everitt2010cambridge}
B.~Everitt and A.Brian, \emph{The Cambridge Dictionary of Statistics},
  4th~ed.\hskip 1em plus 0.5em minus 0.4em\relax Cambridge University Press,
  2010.

\bibitem{reinhard2019bayesian}
D.~Reinhard, M.~Fau\ss{}, and A.~M. Zoubir, ``{Bayesian} sequential joint
  detection and estimation,'' \emph{Sequential Anal.}, vol.~37, no.~04, pp.
  530--570, 2018.

\bibitem{reinhard2021multiple}
------, ``{B}ayesian sequential joint detection and estimation under multiple
  hypotheses,'' \emph{Sequential Anal.}, vol.~41, no.~2, pp. 143--175, 2022.

\end{thebibliography}
\end{document}


\newcommand{\new}[1]{\textcolor{blue}{#1}}
\newcommand{\IEEEQED}{\mbox{\rule[0pt]{1.3ex}{1.3ex}}}

\date{}

\maketitle

\author{
	\begin{center}
\textbf{\large Dominik Reinhard$^1$, Michael Fau\ss{}$^2$, and Abdelhak M. Zoubir$^1$} \\
		$^1$ Signal Processing Group, Technische Universit\"at Darmstadt, 64283 Darmstadt, Germany \\
		$^2$ Department of Electrical Engineering, Princeton University, Princeton, NJ 08544, USA
	\end{center}
}
\symbolfootnote[0]{\normalsize Address correspondence to Dominik Reinhard,
	Signal Processing Group, Technische Universit\"at Darmstadt, Merckstra\ss{}e 25, 
	64283 Darmstadt, Germany; E-mail: reinhard@spg.tu-darmstadt.de
}

\appendix
\section{Parameters of the Projected Quasi-Newton Algorithms}
This section summarizes the choice of the parameters used for the projected quasi-Newton method.

When setting the parameters for the projected quasi-Newton method, the goal is to reduce the number of Monte Carlo runs while not degrading the performance of the line search too much.
For the step size selection presented in  this work, the step size reduction factor $\gamma_\text{red}$, the maximum and the minimum resulting step length $\gamma_\text{max}$ and $\gamma_\text{min}$ have to be set.
The step size reduction factor $\gamma_\text{red}$ is set to $0.5$ which turned out to be a good tradeoff between reducing the step size too fast and requiring too many Monte Carlo simulations.
A step that is too large and increases the coefficients $\detCost[m]$ and $\estCost[m]$, $m\in\{1,\ldots,M\}$ results in a very high average run-length in the next Monte Carlo simulation.
Although the results of this Monte Carlo simulation lead to a rejection of the candidate step size, it results in a waste of computational resources due to the potentially long execution time of the Monte Carlo simulation.
Therefore, the maximum step length is limited to $\gamma_\text{max}$.
For the simulations used in this work, it is set to $\gamma_\text{max}=100$.
Very small resulting step sizes usually need a lot of reduction steps, which are equally to a large number of Monte Carlo simulations, are ineffective and the change in the objective may be in the range of Monte Carlo uncertainty.
Therefore, very small steps are to be avoided and the minimum resulting step length is set to $\gamma_\text{min}$.
This means that the step size reduction is automatically stopped if $\|\gamma\invHess^{(k)}\hat\nabla^{(k)}\| \leq \gamma_\text{min}$.
For the simulations used in this work, it is set to $\gamma_\text{min}=1$.
 \section{Proof of Lemma \ref{ext:lem:convPostH}}\label{proof:convPostH}
We first consider the sequence of random variables
 \begin{align*}
  \frac{1}{n} \log p(\obsSeq\given\Hyp_m,\param_m) =
  \frac{1}{n} \sum_{i=1}^n \log p(\obsSingle_i\given\Hyp_m,\param_m)\,.
 \end{align*}

According to the strong law of large numbers, it holds that
 \begin{align}\label{eq:convLikelihood}
\frac{1}{n} \log p(\obsSeq\given&\Hyp_m,\param_m)  \xrightarrow{a.s.}  \E\bigl[\log p(\RVsingle\given\Hyp_m,\param_m)\given \trueHyp, \trueParam\bigr]\,.
\end{align}
The right hand side of \cref{eq:convLikelihood} can be written as
 \begin{align}
  \E\bigl[\log p(\RVsingle\given\Hyp_m,\param_m)\given \trueHyp, \trueParam\bigr] = - \KL{p(\RVsingle\given\trueHyp,\trueParam)}{\,p(\RVsingle\given\Hyp_m,\param_m)} - H^\star\,,
  \label{eq:kl_def}
 \end{align}
 where $H^\star$ denotes the entropy of $p(\RVsingle\given\trueHyp, \trueParam)$, that is,
 \begin{align*}
  H^\star = - \E\bigl[\log p(\RVsingle\given\trueHyp,\trueParam) \given \trueHyp, \trueParam\bigr].
 \end{align*}
 In what follows, we use the short hand notation
 \begin{align*}
 	\KLsh{\Hyp_m}{\param_m} \coloneqq \KL{p(\RVsingle\given\Hyp_{m^\star},\param_{m^\star}^\star)}{\,p(\RVsingle\given\Hyp_m,\param_m)},
 \end{align*}
 which emphasizes that the KL divergence in \eqref{eq:kl_def} is a function of the hypothesis $\Hyp_m$ and the parameter $\param_m$. By \cref{ext:ass:posKL}, it holds that
 \begin{align}
 	\KLsh{\Hyp_m}{\param_m} \begin{cases}
 														= 0, & (\Hyp_m, \param_m) = (\trueHyp, \trueParam) \\
 													  > 0, & \text{otherwise}
 													\end{cases}
 	\label{eq:kl_cases}
 \end{align}
 That is, the KL divergence in \eqref{eq:kl_def} is zero if and only if the assumed sampling distribution is identical to the true sampling distribution, which in turn implies that the assumed hypothesis and parameter need to match the true ones. In the large sample regime, we can use \eqref{eq:convLikelihood} to express $p(\obsSeq\given\Hyp_m,\param_m)$ as a function of $\KLsh{\Hyp_m}{\param_m}$, namely,
 \begin{align}
  p(\obsSeq\given\Hyp_m,\param_m) \xrightarrow{a.s.} \exp\bigl(- n\KLsh{\Hyp_m}{\param_m} - nH^\star \bigr)
  \label{eq:px_convergence}
 \end{align}
 for $n\rightarrow\infty$.

Now, the joint posterior probability of hypothesis $\Hyp_m$ and parameter $\param_m$ is given by
 \begin{align}
  p(\Hyp_m, \param_m \given \obsSeq) & = \frac{p(\obsSeq\given\Hyp_m, \param_m)p(\Hyp_m, \param_m)}{p(\obsSeq)}\,, \nonumber\\
                                 & = \frac{p(\obsSeq\given\Hyp_m, \param_m)p(\Hyp_m, \param_m)}{\sum\limits_{k=1}^M \int p(\obsSeq\given\Hyp_k,\param_k)p(\Hyp_k, \param_k)\dInt\param_k p(\Hyp_k)}\,. \label{eq:posterior}
 \end{align}
 Taking the limit $n\rightarrow\infty$ and using \eqref{eq:px_convergence} we obtain
 \begin{align*}
 	p(\Hyp_m, \param_m \given \obsSeq) \xrightarrow{a.s.} \frac{\exp\bigl(- n\KLsh{\Hyp_m}{\param_m} \bigr)p(\Hyp_m, \param_m)}{\sum_{k=1}^M \int\exp\left(- n\KLsh{\Hyp_k}{\param_k} \right)p(\Hyp_k, \param_k)\dInt\param_k}
\end{align*}
 From \eqref{eq:kl_cases} it follows that
 \begin{align}
 	\exp\bigl(- n\KLsh{\Hyp_m}{\param_m} \bigr) \xrightarrow{a.s.}
 	\begin{cases}
		1, & (\Hyp_m, \param_m) = (\trueHyp, \trueParam) \\
		0, & \text{otherwise}
	\end{cases}
 \end{align}
This in turn implies
\begin{align*}
	\exp\bigl(- n\KLsh{\Hyp_m}{\param_m} \bigr)p(\Hyp_m, \param_m) \xrightarrow{a.s.}
	\begin{cases}
		p(\trueHyp, \trueParam), & (\Hyp_m, \param_m) = (\trueHyp, \trueParam) \\
		0, & \text{otherwise}
	\end{cases}
\end{align*}
Since the denominator in \eqref{eq:posterior} is merely a normalization factor, it holds that
\begin{align*}
	p(\Hyp_m, \param_m \given \obsSeq) \propto \exp\bigl(- n\KLsh{\Hyp_m}{\param_m} \bigr)p(\Hyp_m, \param_m)
\end{align*}
as $n \to \infty$. This implies that $p(\Hyp_m, \param_m \given \obsSeq)$ tends to zero almost everywhere, except for the point $(\trueHyp, \trueParam)$, at which it needs to tend to infinity in order to remain a valid density function. In other words, $p(\Hyp_m, \param_m \given \obsSeq)$ converges to a point mass at $(\trueHyp, \trueParam)$:
\begin{align*}
	p(\Hyp_m, \param_m \given \obsSeq) \xrightarrow{a.s.} \delta_{(\trueHyp, \trueParam)}(\Hyp_m, \param_m),
\end{align*}
where $\delta_x$ denotes Dirac's delta distribution at $x$. Finally, marginalizing over $\param_m$ yields
\begin{align*}
	p(\Hyp_m\given \obsSeq) = \int p(\Hyp_m, \param_m \given \obsSeq) \dInt\param_m \xrightarrow{a.s.} \int \delta_{(\trueHyp, \trueParam)}(\Hyp_m, \param_m) \dInt\param_m =
	\begin{cases}
		1 & \Hyp_m = \trueHyp\,,\\
		0 & \Hyp_m \neq \trueHyp\,.
	\end{cases}
\end{align*}

This concludes the proof.\hfill\IEEEQED

\section{Proof of Lemma \ref{ext:lem:convPostVar}}\label{proof:convPostVar}

The proof of \cref{ext:lem:convPostVar} follows roughly from the Bernstein-von-Mises theorem.
However, there exist different statements of the Bernstein-von-Mises theorem that are based on different, sometimes very technical, assumptions.
In this proof, we apply a misspecified version of the Bernstein-von-Mises theorem \cite{kleijn2012bernstein} since it is based on conditions that are easily verifiable.
Let $p_m^\star$ be the \gls{pdf} of the true sampling distribution.
Under the assumption that the sampling distribution is not necessarily part of the assumed model, it is stated in \cite{kleijn2012bernstein} that under certain conditions that will be explained shortly, it holds that
\begin{align*}
 \Bigl\lvert p(\param_m\given\Hyp_m,\obsSeq) - \mathcal{N}\bigl(\est{m}, (n\boldsymbol V_{\pseudoTrueParam})^{-1}\bigr)\Bigr\rvert \rightarrow 0\,,
\end{align*}
where $\est{m}$ is some suitable estimator, $\pseudoTrueParam$ is the parameter that minimizes the \gls{kl}
between $p_m^\star(\obsSingle)$ and $p(\obsSingle\given\Hyp_m,\param_m)$
and
 \begin{align} \label{eq:minSecDerKL}
  \bigl[\boldsymbol V_{\pseudoTrueParam}\bigr]_{i,j} = - \frac{\partial^2}{\partial  \paramElement_m^i \partial \paramElement_m^j} \KL{ p_m^\star(\obsSingle) }{ p(\obsSingle\given\Hyp_m,\param_m )}\biggr\rvert_{\param_m = \tilde\param^\star}\,.
 \end{align}
 In the previous equation $[\bm A]_{i,j}$ denotes the $i$th row and the $j$th column of matrix $\bm A$ and $\theta_m^i$ denotes the $i$th element of the parameter vector $\bm \theta_m$.
Assuming that there is no model mismatch, i.e., $\Hyp_m = \trueHyp$, it follows that $\pseudoTrueParam=\trueParam$.
It can be shown that in this case \cref{eq:minSecDerKL} becomes \gls{fim}, i.e.,
\begin{align*}
 \boldsymbol V_{\pseudoTrueParam} = \fishInfMtx{m^\star}\,.
\end{align*}
According to \cite[Lemma 2.1.]{kleijn2012bernstein}, the posterior distribution of $\paramRV_m$ converges in total variation to a normal distribution with covariance matrix $n^{-1}\fishInfMtxInv{m^\star}$ if \cref{ext:ass:localLipschitz} and \cref{ext:ass:klTaylor} hold.
Therefore, we can conclude that $\Tr(\errorCovMatrix[\trueHypIdx])\xrightarrow{a.s.} 0$ and $n\Tr(\errorCovMatrix[\trueHypIdx])\xrightarrow{a.s.} \Tr(\fishInfMtxInv{m^\star})$.

\section{Proof of Lemma \ref{ext:lem:finiteEx}}\label{proof:finiteEx}
From the definition of $\gDiv(\obsSeq)$ it follows that
\begin{align*}
 \E[\gDiv(\obsSeq)] \leq \E[\DDiv(\obsSeq)]\,,
\end{align*}
where the right hand side of the inequality is equal to
\begin{align}\label{eq:expValue}
  & \sum_{i=1, i\neq m}^M \detCostDiv \E[p(\Hyp_i\given\obsSeq)] + \estCostDiv[m] \E[p(\Hyp_m\given\obsSeq)\Tr(\errorCovMatrix)]\,.
\end{align}
For the first term, it holds that
\begin{align*}
 \E[p(\Hyp_i\given\obsSeq)] = p(\Hyp_i) < \infty\,,\quad\forall i\in\{1,\ldots,M\},\; n\geq1\,.
\end{align*}
The expectation in the second term of \cref{eq:expValue} can be rewritten as
\begin{align}\label{eq:var_exp}
\sum_{k=1}^{K_m}\int \Var[\paramRVElement_m^k\given\Hyp_m,\obsSeq] p(\Hyp_m,\obsSeq)\dInt\obsSeq\,.
\end{align}
By using the definition of the posterior variance and expanding the square, the above integral becomes
\begin{align}\label{eq:var_expr_full}
& \E\Bigl[\bigl(\paramRVElement^k_m\bigr)^2\given\Hyp_m\Bigr] - 2\Bigl(\E[\paramRVElement^k_m\given\Hyp_m]\Bigr)^2  + \int \Bigl(\E[\paramRVElement^k_m\given\Hyp_m,\obsSeq]\Bigr)^2p(\obsSeq\given\Hyp_m)\dInt\obsSeq\,.
\end{align}
Due to Jensen's inequality \cite{everitt2010cambridge}, it holds that
\begin{align*}
 \Bigl(\E[\paramRVElement^k_m\given\Hyp_m,\obsSeq]\Bigr)^2 \leq \E\Bigl[\bigl(\paramRVElement^k_m\bigr)^2\given\Hyp_m,\obsSeq\Bigr]\,.
\end{align*}
Therefore, the last term in \cref{eq:var_expr_full} is upper bounded by
\begin{align*}
 \int\!\!\E\Bigl[\bigl(\paramRVElement_m^k\bigr)^2\given\Hyp_m,\obsSeq\Bigr] p(\obsSeq\given\Hyp_m)\dInt\obsSeq \!=\! \E\Bigl[\bigl(\paramRVElement_m^k\bigr)^2\given\Hyp_m\Bigr]\,.
\end{align*}
Hence, we can state that
\begin{align*}
    2\E\Bigl[\bigl(\paramRVElement_m^k\bigr)^2\given\Hyp_m\Bigr] - 2\Bigl(\E[\paramRVElement_m^k\given\Hyp_m]\Bigr)^2 = 2\Var[\paramRVElement_m^k\given\Hyp_m]
\end{align*}
is an upper bound for the integral in \cref{eq:var_exp}.
It follows that
\begin{align*}
 \E[p(\Hyp_m\given\obsSeq)\Tr(\errorCovMatrix)] \leq 2\sum_{k=1}^{K_m} \Var[\paramRVElement_m^k\given\Hyp_m]\,,
\end{align*}
which is finite according to \cref{ext:ass:finiteSecondOrder} for all $m\in\{1,\ldots,M\}$ and, hence, \cref{eq:var_exp} is finite.
Finally, as the coefficients $\detCostDiv[m]$, $\estCostDiv[m]$ are finite for all $m\in\{1,\ldots,M\}$, we conclude that the expectation of $\DDiv(\obsSeq)$, and, hence, also the expectation of $\gDiv(\obsSeq)$ is finite.

\section{Proof of Theorem \ref{ext:theo:derivatives}}\label{app:proofDerivatives}
Assume that both schemes, the optimal one presented in \cite{reinhard2019bayesian,reinhard2021multiple} and the \gls{ao} scheme, are truncated, i.e., the number of samples is restricted not to exceed $N$.
Then, a scheme with stopping rule $\stopR[]$ can be characterized by the set of cost functions
\begin{align}
 \begin{split}\label{eq:defCostFct}
  V_n(\obsSeq;\stopR[]) &= \stopR(n+g(\obsSeq))  + (1-\stopR)\E[V_{n+1}(\RVseq[n+1];\stopR[])\given\obsSeq]\,,\\
  V_N(\obsSeq[N];\stopR[]) &=  N+g(\obsSeq[N])\,.
 \end{split}
\end{align}
The cost function $V_n(\obsSeq;\stopR[])$ describes the cost of the optimal stopping problem when using the stopping rule $\stopR[]$ after observing $\obsSeq$.
Therefore, it holds that $V_0(\stopR[]) = \E[\tau + g(\obsSeq[\tau]);\stopR]$.
Moreover, the definition of the \gls{ao} stopping rule in \cref{ext:eq:AO_sjde} implies
\begin{align}\label{eq:optStopConv}
 \E[\aoRL + g(\obsSeq[\aoRL])] \rightarrow \E[\optRL + g(\obsSeq[\optRL])]
\end{align}
as $\max\{\detConstr_1,\ldots,\detConstr_M, \estConstr_1,\ldots,\estConstr_M\}\; \rightarrow0$.
In \cref{eq:optStopConv}, $\optRL$ and $\aoRL$ denote the stopping time of the optimal and the \gls{ao} scheme, respectively.
Define the sequence of non-negative functions $\Delta V_n(\obsSeq) = V_n(\obsSeq;\stopRao[]) - V_n(\obsSeq;\stopRopt[])$, which can be written as
\begin{align}\label{eq:delta_Vn}
\begin{split}
 \Delta V_{n} = (\stopRao[n] - \stopRopt[n])\bigl(g(\obsSeq[n])+n\bigr) & + (\stopRopt[n]-\stopRao[n]) \E[V_{n+1}(\RVseq[n+1];\stopRopt[])\given\obsSeq[n]] \\
                & + (1-\stopRao[n])\E[\Delta V_{n+1}\given\obsSeq[n]]\,.
\end{split}
\end{align}
The convergence stated in \cref{eq:optStopConv} implies that $\Delta V_{0}\rightarrow0$.
For an arbitrary $n$, the first two terms in \cref{eq:delta_Vn} vanish if the stopping rule at time $n$ converges, whereas the last term vanishes if and only if $\Delta V_{n+1}\rightarrow 0$ almost everywhere.
The latter implies the convergence of the stopping rule and the cost functions at time $n+1$, which in turn only holds if the stopping rules at time $n+1$ converge, i.e., $\stopRao[n+1]-\stopRopt[n+1]\rightarrow0$, and the cost functions at time $n+2$ converge almost everywhere.
Finally, the stopping rules at time instances $n,\ldots,N$ have to convergence so that $\Delta V_n\rightarrow0$ holds almost everywhere.
In particular, \cref{eq:optStopConv} implies that the stopping rules convergence for all $n\geq 0$.

For an arbitrary stopping rule $\stopR[]$, the gradient of the cost function can, similarly to \cite[Lemma IV.2.]{reinhard2021multiple}, be written as
\begin{align*}
\begin{split}
 \frac{\partial}{\partial\detCost[m]}  V_n(\obsSeq;\stopR)
 & = \stopR \ind{\decR\neq m} p(\Hyp_m\given\obsSeq) \\
                                                          & \phantom{{}={}} + (1-\stopR)\E\biggl[\frac{\partial}{\partial\detCost[m]} V_{n+1}(\RVseq[n+1];\stopR)\bbgiven\obsSeq\biggr]\,,
\end{split}\\
 \frac{\partial}{\partial\detCost[m]}  V_N(\obsSeq[N];\stopR) & = \ind{\decR\neq m}p(\Hyp_m\given\obsSeq[N])\,,
\end{align*}
\begin{align*}
\begin{split}
 \frac{\partial}{\partial\estCost[m]}  V_n(\obsSeq;\stopR)
 & = \stopR \ind{\decR = m} p(\Hyp_m\given\obsSeq)\Tr(\errorCovMatrix) \\
                                                          & \phantom{{}={}} + (1-\stopR)\E\biggl[\frac{\partial}{\partial\estCost[m]} V_{n+1}(\RVseq[n+1];\stopR)\bbgiven\obsSeq\biggr]\,,
\end{split}\\
\begin{split}
\frac{\partial}{\partial\estCost[m]}  V_N(\obsSeq[N];\stopR) & = \ind{\decR = m} p(\Hyp_m\given\obsSeq[N])\Tr(\errorCovMatrix)\,.
\end{split}
\end{align*}
Using similar arguments as for the proof above, it can be shown that the gradient convergences if and only if the stopping rules converge.
As it was shown previously that the stopping rules convergence almost everywhere, also the gradients converge.

According to \cite[Theorem IV.1.]{reinhard2021multiple}, it holds that
\begin{align*}
 \frac{\partial}{\partial\detCost[m]} \E[V_1(\RVsingle_1;\stopRopt[])] & = p(\Hyp_m)\detErr\,,\\
 \frac{\partial}{\partial\estCost[m]} \E[V_1(\RVsingle_1;\stopRopt[])] & = p(\Hyp_m)\estErr\,,
\end{align*}
and as the gradients converge, this relationship holds also for the \gls{ao} stopping rule in the asymptotic case.
In this proof, a truncated version of the strictly optimal and the \gls{ao} procedure were considered.
As there is no restriction on the maximum number of samples for the \gls{ao} procedure, the limit $N\rightarrow\infty$ has to be taken.
However, this does not affect the above derivations.
This completes the proof.

\section{Proof of Theorem \ref{ext:theo:asymptotic_duality}} \label{proof:asymptotic_duality}
 For sufficiently small nominal error levels, i.e., such that \cref{ext:theo:derivatives} holds, the gradients of the objective in \cref{ext:eq:finalMaxProblem} are given by
\begin{align}\label{eq:grad}
\begin{split}
 \gradDet & = [\,p(\Hyp_1)(\detErr[1]-\detConstr_1)\,,\,\ldots\,,\,p(\Hyp_M)(\detErr[M]-\detConstr_M)\,]^\top\,,\\
 \gradEst & = [\,p(\Hyp_1)(\estErr[1]-\estConstr_1)\,,\,\ldots\,,\,p(\Hyp_M)(\estErr[M]-\estConstr_M)\,]^\top\,.
 \end{split}
\end{align}
As it is assumed that $\detCost[m],\estCost[m]$ are strictly positive, \cref{ext:eq:finalMaxProblem} attains its maximum only if $
 \gradDet = \gradEst = 0\,,$
which in turn only holds if
\begin{align*}
 \detErr[m] & = \detConstr_m\,,\quad \estErr[m]  = \estConstr_m\,,\quad m\in\{1,\ldots,M\}\,.
\end{align*}
Since the objective of \cref{ext:eq:finalMaxProblem} is equivalent to
$\E[\aoRL] + \sum_{m=1}^M p(\Hyp_m)(\detCost[m]\detErr + \estCost[m]\estErr)$, it holds that
\begin{align*}
 & \E[\aoRL + g(\obsSeq[\aoRL])] - \sum_{m=1}^M p(\Hyp_m)(\detCost[m]\detConstr_m + \estCost[m]\estConstr_m) \\
 = & \E[\aoRL] + \sum_{m=1}^M p(\Hyp_m)(\detCost[m]\detErr + \estCost[m]\estErr)  - \sum_{m=1}^M p(\Hyp_m)(\detCost[m]\detConstr_m + \estCost[m]\estConstr_m) \\
 = & \E[\aoRL]\,.
\end{align*}
According to \cref{ext:eq:AO_sjde}, we can conclude that
\begin{align*}
 \E[\aoRL] \rightarrow \E[\optRL]\,.
\end{align*}
Hence, the procedure that is parameterized by the solution of \cref{ext:eq:finalMaxProblem} fulfills all constraints with equality and asymptotically uses on average as few samples as possible, i.e., it solves \cref{ext:eq:constrProblem} asymptotically.

 \section{Proof of Assumptions \ref{ext:ass:fishInf} to \ref{ext:ass:klTaylor} for the Shift-in-Mean Scenario}\label{app:sim_assumptions}

Fisher's information is given by
\begin{align*}
 \fishInf[\paramScalar_m]{m} = \bigl(\sigma^2\bigr)^{-1}\,,
\end{align*}
for which $0<\fishInf[\paramScalar_m]{m}<\infty$ holds with probability one.
Hence, \cref{ext:ass:fishInf} is fulfilled.

The \gls{kl} is calculated as
\begin{align*}
 \KL{p(\obsSingleScalar\given\Hyp_m,\paramScalar_m)}{p(\obsSingleScalar\given\Hyp_k,\paramScalar_k)} = \frac{(\paramScalar_m-\paramScalar_k)^2}{2\sigma^2}\,.
\end{align*}
As the prior distributions have a disjoint support, the \gls{kl} is positive whenever $k\neq m$. Hence, \cref{ext:ass:posKL} holds.
Moreover, the \gls{kl} has a quadratic form, which means that it has a second order Taylor-expansion in the sense of \cref{ext:ass:klTaylor}.

It is left to show that \cref{ext:ass:localLipschitz} holds.
The log-likelihood function is given by
\begin{align}
 \log p(\obsSingleScalar\given\trueHyp,\paramScalar) = -0.5\log(2\pi\sigma^2) - \frac{(\obsSingleScalar - \paramScalar)^2}{2\sigma^2}\,.
\end{align}
It suffices to show that the absolute value of the first derivative of the log-likelihood function is bounded in a neighborhood of $\trueParamScalar$.
The score, i.e., the first derivative of the log likelihood function, is
\begin{align*}
 \frac{\partial}{\partial \paramScalar} \log p(\obsSingleScalar\given\trueHyp,\paramScalar) = \frac{\obsSingleScalar - \paramScalar}{\sigma^2}\,.
\end{align*}
Let $U$ denote the neighborhood of the true parameter $\trueParamScalar$, then for all $\paramScalar\in U$ the score is continuous and bounded.
Hence, by setting $f_{\trueParamScalar}(\obsSingleScalar) = \max_{\paramScalar\in U} \lvert \frac{\obsSingleScalar - \paramScalar}{\sigma^2} \rvert$
\cref{ext:ass:localLipschitz} is fulfilled.

\section{Proof of Assumptions \ref{ext:ass:fishInf} to \ref{ext:ass:klTaylor} for the QAM Scenario}\label{app:QAM_assumptions}
To show that \cref{ext:ass:fishInf} holds, the second derivative of the log-likelihood function has to be calculated, i.e., 
\begin{align*}
 \frac{\partial^2}{\partial (\noisePower)^2} \log p(\obsSingleScalar\given\Hyp_m,\noisePower) = \bigl(\noisePower\bigr)^{-2} - 2 \frac{\lvert\obsSingleScalar-\QAMsymbol\rvert^2}{\bigl(\noisePower\bigr)^3}\,.
\end{align*}
Taking the conditional negative expectation of the second-order derivative, yields Fisher's information, i.e.,
\begin{align*}
 \fishInf[\noisePower]{m} = \bigl(\noisePower\bigr)^{-2}\,.
\end{align*}
For Fisher's information it holds that $0<\fishInf[\noisePower]{m}<\infty$ except for a $P$-null set, i.e., it holds with probability one.
Therefore, \cref{ext:ass:fishInf} is fulfilled.

To prove \cref{ext:ass:posKL}, the \gls{kl} $\KL{p(\obsSingleScalar\given\Hyp_m,\noisePower)}{p(\obsSingleScalar\given\Hyp_k,\noisePower)}$ has to be calculated, which is given by
\begin{align*}
 \KL{p(\obsSingleScalar\given\Hyp_m,\noisePower)}{p(\obsSingleScalar\given\Hyp_k,\noisePower)} = \frac{\lvert\QAMsymbol[m]-\QAMsymbol[k]\rvert^2}{\sigma^2}\,.
\end{align*}
As the means $\QAMsymbol[m]$ and $\QAMsymbol[k]$ are deterministic and not equal, \cref{ext:ass:posKL} holds with probability one.
Contrary to the shift-in-mean scenario, the \gls{kl} $\KL{p(\obsSingleScalar\given\Hyp_m,\noisePower)}{p(\obsSingleScalar\given\Hyp_m,\noisePowerTilde)}$ has to be considered.
This is given by
\begin{align*}
 \KL{p(\obsSingleScalar\given\Hyp_m,\noisePower)}{p(\obsSingleScalar\given\Hyp_m,\noisePowerTilde)} = \log\biggl(\frac{\noisePowerTilde}{\noisePower}\biggr) + \frac{\noisePower}{\noisePowerTilde} - 1\,.
\end{align*}
It can be shown that the second order Taylor-expansion of the \gls{kl} is equal to
\begin{align*}
 \bigl(2\bigl(\noisePower\bigr)^2\bigr)^{-1}\bigl(\noisePowerTilde-\noisePower)^2 + o(|\noisePower-\noisePowerTilde|)\,,
\end{align*}
and hence, \cref{ext:ass:klTaylor} is fulfilled.

To show that \cref{ext:ass:localLipschitz} holds, the log-likelihood function
\begin{align}
 \log p(\obsSingleScalar\given\trueHyp,\noisePower) = -\log(\pi\noisePower) - \frac{|\obsSingleScalar - \QAMsymbol[\trueHypIdx]|^2}{\noisePower}
\end{align}
is calculated.
Taking the first derivative, yields
\begin{align*}
 \frac{\partial}{\partial \noisePower} \log p(\obsSingleScalar\given\trueHyp,\noisePower) = -\frac{1}{\noisePower} + \frac{|\obsSingleScalar - \QAMsymbol[\trueHypIdx]|^2}{\bigl(\noisePower\bigr)^2}\,.
\end{align*}
The derivative is continuous and bounded on an interval excluding zero.
In order to fulfill assumption \cref{ext:ass:localLipschitz}, the derivative must be bounded in a neighborhood $U$ around the true variance almost surely.
As $\noisePower=0$ occurs with probability zero and also $0\in U$ occurs with probability zero, this condition holds almost surely.
The function $f_{\trueParamScalar}(\obsSingleScalar)$ can then be set to $\max_{\noisePower\in U} \left\lvert \frac{1}{\noisePower} + |\obsSingleScalar - \QAMsymbol[\trueHypIdx]|^2\bigl(\noisePower\bigr)^{-2} \right\rvert$.

\section{Proof of Assumptions \ref{ext:ass:fishInf} to \ref{ext:ass:klTaylor} for Estimating a Random Parameter in Unknown Noise}\label{app:diffNoise_assumptions}
Under both hypotheses, Fisher's information can be calculated as
\begin{align}
 \fishInf[\paramScalar]{0} & = \frac{1}{\sigma^2} \quad \text{and} \quad \fishInf[\paramScalar]{1} = \frac{1}{\varsigma^2}\,,
\end{align}
which is positive and finite for $0<\sigma^2<\infty$ and $0<\varsigma^2<\infty$, respectively.

In order to prove that \cref{ext:ass:posKL} holds, the \gls{kl}
\begin{align}
 \KL{p(\obsSingle\given\Hyp_0,\paramScalar_0)}{p(\obsSingle\given\Hyp_1,\paramScalar_1)}
 = & -\log(2\varsigma) - 1 + \frac{1}{2}\log(2\pi\sigma^2) + \frac{2\varsigma^2 + (\paramScalar_0 - \paramScalar_1)^2}{2\sigma^2}
\end{align}
has to be calculated.
By setting $\varsigma=\sqrt{0.5\sigma^2}$, the \gls{kl} becomes
\begin{align}
 \KL{p(\obsSingle\given\Hyp_0,\paramScalar_0)}{p(\obsSingle\given\Hyp_1,\paramScalar_1)}
 =
 \log(\pi) - 1 + \frac{\sigma^2 + (\paramScalar_0 - \paramScalar_1)^2}{2\sigma^2}\,.
\end{align}
The last term is positive for all $\paramScalar_0$ and all $\paramScalar_1$ as it only consists of quadratic terms.
It further holds that $\log(\pi) - 1>0$ and, hence, the \gls{kl} is positive for all $\paramScalar_0$ and $\paramScalar_1$.

\cref{ext:ass:klTaylor} can be proven in close analogy to \cref{app:sim_assumptions} under $\Hyp_0$.
Under the alternative, however, the \gls{kl} calculates as
\begin{align}
 & \KL{p(\obsSingle\given\Hyp_1,\paramScalar_1)}{p(\obsSingle\given\Hyp_1,\paramScalar_1^\prime)} = e^{-\frac{|\paramScalar_1^\prime-\paramScalar_1|}{\varsigma}} + \frac{|\paramScalar_1^\prime-\paramScalar_1|}{\varsigma} - 1\,,
\end{align}
which has the second order Taylor-expansion
\begin{align}
 \frac{1}{2\varsigma} \lvert \paramScalar_1^\prime - \paramScalar_1\rvert^2\,.
\end{align}
Hence, \cref{ext:ass:klTaylor} holds under both hypotheses.

The proof that \cref{ext:ass:localLipschitz} holds for $\Hyp_0$ can be done similarly to the one outlined in \cref{app:sim_assumptions}.
Under $\Hyp_1$, the log likelihood function calculates as
\begin{align}
 \log p(\obsSingle\given\Hyp_1,\paramScalar) = - \log(2\varsigma) -  \frac{\lvert \obsSingleScalar-\paramScalar \rvert}{\varsigma}\,.
\end{align}
It further holds that
\begin{align}
 \Bigl\lvert \log \frac{p(\obsSingle\given\Hyp_1,\paramScalar)}{p(\obsSingle\given\Hyp_1,\paramScalar^\prime)}\Bigr\rvert = \Bigl\lvert - \frac{\lvert\obsSingleScalar-\paramScalar\rvert}{\varsigma} + \frac{\lvert\obsSingleScalar-\paramScalar^\prime\rvert}{\varsigma} \Bigr\lvert
 \leq \frac{1}{\varsigma} \lvert \paramScalar - \paramScalar^\prime\rvert\,,
\end{align}
and, hence, \cref{ext:ass:localLipschitz} holds.

\bibliographystyle{IEEEtran}
\bibliography{./IEEEabrv,./mrabbrev,./references}